\pdfoutput=1
\documentclass[11pt]{article}
\usepackage{subfig}
\usepackage{amssymb,amsmath,amsthm}
\usepackage{graphics,epsfig}
\usepackage{algorithmic}
\usepackage{algorithm}
\usepackage[margin=1 in]{geometry}

\usepackage{lipsum}
\usepackage{mathtools}
\usepackage{cuted}
\usepackage{hhline}

\newtheorem{defs}{Definition}
\newtheorem{lem}{Lemma}[section]

\newtheorem{thm}[lem]{Theorem}

\newtheorem{fact}[lem]{Fact}

\title{Improved Algorithms for Unrelated Crowd Worker Scheduling in Mobile Social Networks}
\author{Chi-Yeh~Chen 
\\ Department of Computer Science and Information
Engineering, \\ National Cheng Kung University, \\
Taiwan, ROC. \\
chency@mail.csie.ncku.edu.tw.}

\begin{document}

\maketitle
\begin{abstract}
This paper addresses the scheduling problem for unrelated crowd workers in mobile social networks, where the required service time for each task varies among the assigned crowd workers. The goal is to minimize the total weighted completion time of all tasks. First, in an environment with identical crowd workers, we improve the approximation ratio of the Largest-Ratio-First (LRF) scheduling algorithm and provide an updated competitive ratio for its online version. Next, for the unrelated crowd workers environment, we introduce a randomized approximation algorithm that achieves an expected approximation ratio of 1.45. This result improves upon the 1.5-approximation ratio reported in our previous work. We also present a derandomization method for this algorithm. Furthermore, to improve computational efficiency, we propose an algorithm that leverages the property that the optimal schedule on a single crowd worker arranges tasks in non-increasing order by their Smith ratios. Experimental results demonstrate that our proposed method outperforms three variants of the LRF algorithm.

\begin{keywords}
Mobile Crowdsourcing, Task Scheduling, Online Algorithm, Mobile Social Network, Unrelated Crowd Workers
\end{keywords}
\end{abstract}

\section{Introduction}\label{sec:introduction}
Portable mobile devices and sensor networks are now widely used in modern society. These technologies are employed across a variety of application scenarios to perform complex tasks, such as measuring air quality~\cite{Dutta2009}, assessing urban noise~\cite{Rana2010}, monitoring road traffic~\cite{Farkas2014}, and evaluating road surface conditions~\cite{Singh2017}. Notably, these applications often generate a significant volume of sensing tasks that require real-time processing, all while operating with limited resources. The concept of \textit{mobile crowdsourcing} arises from the consideration that the computing and sensing capabilities of a single mobile user are often insufficient to complete large-scale projects comprising numerous small, independent tasks. Consequently, to achieve the overall goal effectively, it becomes essential to seek assistance from other users and distribute these tasks among them.

Performing mobile crowdsourcing activities within large-scale systems often incurs significant management and maintenance costs. However, if requesters can distribute crowdsourcing tasks within their own private social circles, they can leverage existing Mobile Social Networks (MSNs) to complete these tasks. This approach effectively reduces system overhead and lowers costs. In this paper, we refer to the task requester as $r$, who assigns tasks to other crowd workers. Due to differences in sensing equipment and behavioral patterns among crowd workers, their performance may vary even when executing the same task~\cite{Fan2015, Wang2025, 9420265}. When requester $r$ allocates a specific task $j$ to an appropriate crowd worker $i$ (where $1 \leq i \leq m$), that crowd worker reports the results back to the requester upon completion. The total time spent on completing a task includes the meeting time between $r$ and $i$ (both before and after the task), as well as the time required to perform the task itself, denoted as $p_{ij}$.

This paper investigates task assignment and scheduling problems involving unrelated (heterogeneous) crowd workers within mobile crowdsourcing systems based on MSNs. Our main objective is to minimize the total weighted completion time, while accounting for service times that vary by assigned crowd worker. A critical component of this system is bidirectional communication between the requester and crowd workers, enabling both task distribution and feedback on results. We consider a generalized model in which each task requires two distinct interactions: one for the initial assignment and another for feedback delivery. Furthermore, we analyze this problem in both offline and online scenarios.

\subsection{Related Work}
In the field of mobile crowdsourcing, the task scheduling problem has been extensively studied in previous literature~\cite{Allahverdi2008, Ding2016, Bridi2016, Bhatti2021}. Specifically, Xiao \textit{et al.}~\cite{Xiao2015} investigated crowdsensing scheduling within the MSNs framework, focusing on utilizing mobile devices to collect environmental data. They developed offline and online algorithms based on greedy strategies to reduce the average completion time. Zhang \textit{et al.}~\cite{Zhang2025} further proposed algorithms that minimize both the total weighted completion time and the makespan. Additionally, Chen~\cite{chen2025approx} identified a gap in the approximation analysis of Zhang \textit{et al.}'s work and provided a revised bound  represented by $$\max\left\{\frac{3}{2},\frac{w_{max}\cdot\phi_{max}}{w_{min}\cdot\phi_{min}}\right\}$$, where $\phi_{max}$ refers the maximum expected inter-contact time, $\phi_{min}$ refers the minimum expected inter-contact time, $w_{\max}$ is the maximum weight of a task, and $w_{\min}$ is the minimum weight of a task. Moreover, he proposed an algorithm with approximation ratio $\max\left\{2.5,1+\epsilon\right\}$ for $\epsilon>0$.

Task scheduling in the field of mobile crowdsourcing is closely related to the parallel machine scheduling problem~\cite{Maiti2020}. The latter primarily investigates how to efficiently assign tasks to multiple running machines, aiming to distribute tasks across processing units to minimize makespan or maximize profit via optimization algorithms. The key difference between the two is that mobile crowdsourcing systems involve dynamic crowd workers whose locations and availability change constantly. In contrast, the parallel machine scheduling problem deals with static machines or servers with fixed performance and locations. In scenarios that disregard communication latency and focus solely on machine processing power, significant progress has been made in the literatures:
\begin{itemize}
\item \textbf{Minimizing Makespan:} Graham~\cite{graham1969bounds} proposed a scheduling algorithm based on the Longest Processing Time (LPT) rule and proved its approximation ratio to be $\frac{4}{3} - \frac{1}{3m}$. 

\item \textbf{Minimizing Total Weighted Completion Time:} Eastman \textit{et al.}~\cite{eastman1964} demonstrated that the Largest Ratio First (LRF) algorithm can achieve an approximation ratio of 1.5. Subsequently, Kawaguchi and Seiki~\cite{Kawaguchi1986} further refined this result, optimizing the approximation ratio to $\frac{\sqrt{2}+1}{2} \approx 1.207$.
\end{itemize}

Recent research has made significant advances in the more complex, unrelated machine scheduling problem of minimizing total weighted completion time. Bansal \textit{et al.}~\cite{bansal2016lift} were the first to break the long-standing approximation ratio barrier of 1.5. Li~\cite{li2020scheduling} later optimized this ratio to $1.5 - \frac{1}{6000}$. Following these developments, Im and Shadloo~\cite{im2020weighted} utilized the Iterative Fair Contention Resolution technique, achieving an approximation ratio of 1.488. Finally, Im and Li~\cite{Im2023} successfully improved the approximation ratio to 1.45 by implementing a strategy that allows for non-disjoint task grouping.

The research scope of mobile crowdsourcing extends beyond task scheduling to include incentive mechanisms~\cite{Fan2015}, quality control~\cite{yan2015}, and security and privacy protection~\cite{Wu2014}. For example, in Event-based Scheduling Networks, She \textit{et al.}~\cite{She2015} defined the Utility-aware Social Event-participant Planning problem, which aims to create personalized activity arrangements. This topic has been further explored in numerous studies~\cite {Cheng2017, Cheng2021, Liu2012}. The primary objective of these efforts is to plan optimal itineraries that enhance individual experiences. In the realm of utility maximization, Guo \textit{et al.}~\cite{Guo2024TMC} investigated the Multi-Task Diffusion Maximization problem, which seeks to maximize the overall utility of executing multiple crowdsourcing tasks under budget constraints. Meanwhile, Wang \textit{et al.}~\cite{Wang2025} concentrated on a platform-centric, online spatiotemporal mobile crowdsourcing system. Their study focuses on designing an online scheduling mechanism that maximizes the platform's long-term utility while satisfying constraints on both the crowd worker and platform sides.

\subsection{Our Contributions}
This paper addresses the scheduling problem for unrelated crowd workers in mobile social networks, where the required service time for each task varies among the assigned crowd workers. The goal is to minimize the total weighted completion time of all tasks. The specific contributions of this paper are outlined below:

\begin{itemize}
\item In an environment with identical crowd workers, we present new analytical results for the Largest-Ratio-First (LRF) scheduling algorithm. Our findings improve the approximation ratio from $\max\left\{\frac{3}{2},\frac{w_{max}\cdot\phi_{max}}{w_{min}\cdot\phi_{min}}\right\}$ to $\max\left\{\frac{3}{2},\frac{\phi_{max}}{\phi_{min}}\right\}$ where $\phi_{max}$ is the maximum expected inter-contact time, $\phi_{min}$ is the minimum expected inter-contact time, $w_{\max}$ is the maximum weight of task, and $w_{\min}$ is the minimum weight of task. Additionally, we provide an analysis of the algorithm's online version, achieving a competitive ratio of $\alpha\left(1 + \frac{\Phi_{max} \sum_{j=1}^{n}w_{j}}{\sum_{j=1}^{n}w_{j}(\Phi_{min}+p_{j})}\right)$, where $\alpha$ represents the approximation ratio of any offline algorithm.

\item In the context of unrelated crowd worker environments, we propose a randomized approximation algorithm with an expected approximation ratio of 1.45. This improvement surpasses the previously reported 1.5-approximation ratio from our earlier work~\cite{chen2025approx}. Additionally, we introduce a derandomization method for this algorithm.

\item To improve computational efficiency, we propose an algorithm that utilizes the principle that the optimal schedule on a single crowd worker arranges tasks in non-increasing order of their Smith ratios. Experimental results show that our proposed method outperforms three variants of the LRF algorithm.
\end{itemize}

\subsection{Organization}
The structure of this paper is organized as follows. Section~\ref{sec:Preliminaries} introduces the fundamental notations and preliminary concepts used throughout the study. The core algorithms are then detailed in the subsequent sections: Section~\ref{sec:LRF} analyzes the Largest-Ratio-First task scheduling algorithm, while Section~\ref{sec:Algorithm3} proposes a randomized approach that uses linear programming relaxation to minimize the total weighted completion time. Following this, Sections~\ref {sec:Algorithm4} and~\ref {sec:Algorithm5} discuss deterministic algorithms that achieve the same objective, with the latter emphasizing computational efficiency. Performance evaluations and comparisons are provided in Section~\ref{sec:Results}, and Section~\ref{sec:Conclusion} summarizes our findings and offers concluding remarks.

\section{Notation and Preliminaries}\label{sec:Preliminaries}
We consider a set of mobile users, denoted as $\mathcal{M} \cup \left\{r\right\}$, in a mobile social network, where $r$ is the requester and $\mathcal{M}$ represents the set of crowd workers. Each mobile user is equipped with a device that enables communication with others within a predefined range, providing sufficient connection duration and bandwidth to support crowdsourcing tasks.  To model the communication patterns among mobile users, we utilize a probability distribution. Since communication between crowd workers is not permitted, we focus solely on the interactions between the requester and the crowd workers. Let $\phi_{i}$ represent the expected inter-contact time between the requester and crowd worker $i$. This probability distribution can be arbitrary; for instance, Gao \textit{et al.}~\cite{Gao2009} assume that the pairwise inter-contact times follow an exponential distribution.

In the mobile social network, a user wants to recruit other users to help complete certain crowdsourcing tasks. We refer to the user who initiates the request as the requester, while those who perform the tasks are known as crowd workers. We denote the requester as $r$ and the other $m$ users in the mobile social network, representing the crowd workers.


The requester $r$ has a set of $n$ indivisible crowdsourcing tasks, represented by the set $\mathcal{J}$. Each task $j \in \mathcal{J}$ is associated with a weight $w_j$, where $w_j \geq 0$. The Required Service Time (RST) for task $j$ when processed by crowd worker $i$ is given by $p_{ij}$, with $p_{ij} \geq 0$. Since the tasks are indivisible, we must assign each task to exactly one crowd worker; however, a single crowd worker may handle multiple tasks. Additionally, each task incurs additional overhead for distribution and feedback between the crowd worker and the requester.

A scheduling decision for $n$ crowdsourcing tasks involves partitioning the tasks into $m$ disjoint sets, denoted as $\{\mathcal{J}_{1}, \mathcal{J}_{2}, \ldots, \mathcal{J}_{m}\}$, where $\mathcal{J}_{i}$ represents the subset of tasks assigned to crowd worker $i$. Each crowd worker processes their assigned tasks sequentially. For a specific task $j \in \mathcal{J}_{i}$, its Completion Time (CT) consists of three components: (a) the initial setup time for task distribution between requester $r$ and crowd worker $i$; (b) the cumulative processing time for task $j$ and all tasks in $\mathcal{J}_{i}$ that precede it; and (c) the duration of the subsequent meeting for feedback and final delivery.

Due to the unpredictable nature of meeting times between the requester and crowd workers, we utilize the Expected Meeting Time (EMT) framework proposed by Zhang \textit{et al.}~\cite{Zhang2025}. We assume that the inter-meeting time between requester $r$ and crowd worker $i$ follows a specific probability distribution. The EMT is defined as $\phi_{i}$, which accounts for both the distribution's duration (part a) and the feedback phase (part c). Let $C_{j}$ represent the completion time of task $j$. This completion time is the sum of two EMT intervals, along with the cumulative processing time of task $j$ and its predecessors in the set $\mathcal{J}_{i}$. Our objective is to schedule the tasks within the mobile social network in order to minimize the total weighted completion time, denoted by $\sum_{j=1}^{n} w_{j}C_{j}$.

The Expected Workload (EW) of a crowd worker is defined as follows.
\begin{defs} (Expected Workload (EW) \cite{Zhang2025}). The expected workload $EW_{i}$ for crowd worker $i$ consists of three main elements: (a) the expected meeting time (EMT) needed for the initial task distribution between crowd worker $i$ and requester $r$; (b) the total required service time to complete all tasks assigned to crowd worker $i$; and (c) the EMT required for the crowd worker to meet with the requester again for final feedback delivery. Formally, if crowd worker $i$ is assigned a set of tasks $\mathcal{J}_{i}$, the expected workload is defined as:$$EW_{i} = 2\phi_{i} + \sum_{j \in \mathcal{J}_{i}} p_{ij}.$$ In the case where no tasks are assigned to crowd worker $i$ (i.e., $\mathcal{J}_{i} = \emptyset$), the workload is defined as $EW_{i} = 2\phi_{i}$.
\end{defs}

It is worth noting that $C_{j} = EW_{i}$ when $j$ represents the last task to be processed. Additionally, let $\Phi_{i}$ be the total contact time between the requester $r$ and the crowd worker $i$. Define $\phi_{max}=\max_{i\in [1, n]} \phi_{i}$, $\phi_{min}=\min_{i\in [1, n]} \phi_{i}$, $\Phi_{max}=2\phi_{max}$ and $\Phi_{min}=2\phi_{min}$.
The notation and terminology used in this paper are summarized in Table~\ref{tab:notations}.

\begin{table}[ht]
\caption{Notation and Terminology}
    \centering
        \begin{tabular}{||c|p{5in}||}
    \hline
		 $\mathcal{M}$      & Te set of mobile users. \\
		\hline
     $m$                & The number of crowd workers.          \\
    \hline    
		 $\mathcal{J}$      & Te set of indivisible crowdsourcing tasks. \\
		\hline
     $n$      & The number of tasks.          \\
    \hline    
  	 $p_{ij}$ & The Required Service Time (RST) for task $j$ when processed by crowd worker $i$.         \\
    \hline
    $\phi_{i}$ & The expected inter-contact time between the requester $r$ and crowd worker $i$.        \\
    \hline
    $\Phi_{i}$ & $\Phi_{i}=2\phi_{i}$ is the total contact time between the requester $r$ and the crowd worker $i$.        \\
    \hline
     $EMT$      & The expected meeting time, i.e., the time interval between meetings between the requester $r$ and the crowd worker, follows an probability distribution, and the EMT is calculated as $\phi_{i}$.        \\
    \hline		
     $EW_{i}$      & The expected workload of a crowd worker $i$.        \\
    \hline
     $C_{j}$ & The completion time of task $j$.         \\
    \hline    
     $w_{j}$ & The weight of task $j$.         \\
    \hline
     $\phi_{max}, \phi_{min}$     & $\phi_{max}=\max_{i\in [1, n]} \phi_{i}$ is the maximum expected inter-contact time and $\phi_{min}=\min_{i\in [1, n]} \phi_{i}$ is the minimum expected inter-contact time. \\
    \hline  
     $\Phi_{max}, \Phi_{min}$     & $\Phi_{max}=2\phi_{max}$ and $\Phi_{min}=2\phi_{min}$. \\
    \hline  		
        \end{tabular}
    \label{tab:notations}
\end{table}

\section{Largest-Ratio-First Based Task Scheduling}\label{sec:LRF}
The Largest-Ratio-First (LRF) algorithm was first proposed by Zhang \textit{et al.} \cite{Zhang2025}. When assigning tasks to crowd workers, this algorithm sorts tasks in non-increasing order by the Smith ratio \cite{smith1956} and then sequentially assigns each task to the crowd worker with the lowest current load. In our previous work \cite{chen2025approx}, we demonstrated that the approximation ratio of the LRF algorithm is given by $\max\left\{\frac{3}{2},\frac{w_{max}\cdot\phi_{max}}{w_{min}\cdot\phi_{min}}\right\}$, where $w_{\max}$ and $w_{\min}$ represent the maximum and minimum task weights, respectively. However, our earlier study provided relatively loose bounds for both the upper limit of the LRF algorithm and the lower limit of the optimal solution. Therefore, this paper aims to analyze tighter upper and lower bounds,  allowing us  to derive a more precise approximation ratio. Since the LRF algorithm performs scheduling on identical crowd workers, we set $p_{ij}=p_{j}$ for all $i\in \mathcal{M}$ and $j\in \mathcal{J}$. The procedure outlined in Algorithm \ref{Alg_LRF} is known as the Largest-Ratio-First (LRF) algorithm. Lines \ref{Alg_LRF:1}-\ref{Alg_LRF:2} handle the initialization of all variables. Subsequently, in lines \ref{Alg_LRF:3}-\ref{Alg_LRF:4}, tasks are assigned to the crowd worker with the minimum load, following the rules of the Largest-Ratio-First principle. The time complexity analysis of the LRF algorithm is as follows: sorting the ratios requires $O(n \log n)$ time, and assigning tasks to crowd workers takes $O(nm)$ time. Consequently, the overall time complexity is $O(nm + n \log n)$.

\begin{algorithm}[htbp]
\caption{The LRF Algorithm}
\label{Alg_LRF}
\begin{algorithmic}[1]
    \REQUIRE Set of crowd workers $\mathcal{M}$, set of tasks $\mathcal{J}$
    \ENSURE Final partition $\Lambda_{LRF} = \{\mathcal{J}_1, \mathcal{J}_2, \dots, \mathcal{J}_m\}$
    
    \STATE \textbf{Initialize:}
    \FOR{each crowd worker $i \in \mathcal{M}$} \label{Alg_LRF:1}
        \STATE $\mathcal{J}_i \leftarrow \emptyset$
        \STATE $EW_i \leftarrow 2\phi_{i}$
    \ENDFOR \label{Alg_LRF:2}

    \STATE \textbf{Main Loop:}
    \FOR{each task $j \in \mathcal{J}$ in non-increasing order of the Smith ratio} \label{Alg_LRF:3}
        \STATE $i^* \leftarrow \arg\min_{i \in \mathcal{M}} \{ EW_i \}$ \label{Alg_LRF:select}
        \STATE $\mathcal{J}_{i^*} \leftarrow \mathcal{J}_{i^*} \cup \{j\}$
        \STATE $EW_{i^*} \leftarrow EW_{i^*} + p_j$
    \ENDFOR \label{Alg_LRF:4}

    \RETURN $\Lambda_{LRF} = \{\mathcal{J}_1, \mathcal{J}_2, \dots, \mathcal{J}_m\}$
\end{algorithmic}
\end{algorithm}

We use the following fact in later parts of the proof.
\begin{fact}\label{fact:1}
Given $a, a', b, b'>0$, we have $\frac{a+a'}{b+b'}\leq \max\left\{\frac{a}{b},\frac{a'}{b'}\right\}$.
\end{fact}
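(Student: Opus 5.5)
The plan is to reduce Fact~\ref{fact:1} to the elementary mediant inequality by clearing denominators; this is legitimate because $b,b'>0$. Write $M=\max\left\{\frac{a}{b},\frac{a'}{b'}\right\}$. By definition of the maximum, $\frac{a}{b}\leq M$ and $\frac{a'}{b'}\leq M$. Since $b>0$ and $b'>0$, these two inequalities can be multiplied through by their denominators without reversing direction, giving $a\leq Mb$ and $a'\leq Mb'$.

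Adding the two inequalities gives $a+a'\leq M(b+b')$. Because $b+b'>0$, dividing by $b+b'$ preserves the inequality and yields
\begin{equation*}
\frac{a+a'}{b+b'}\leq M=\max\left\{\frac{a}{b},\frac{a'}{b'}\right\},
\end{equation*}
which is exactly the claim.

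There is no genuine obstacle here. The only point needing care is that positivity of the denominators is what lets us multiply and divide without flipping inequalities. Positivity of $a,a'$ is not actually used, so the fact also holds for $a,a'\geq 0$. That extension is convenient later, when $a,a'$ will be weighted completion times or weights that may vanish. If a sharper form is wanted, the same computation shows the mediant $\frac{a+a'}{b+b'}$ is a convex combination of $\frac{a}{b}$ and $\frac{a'}{b'}$, with coefficients $\frac{b}{b+b'}$ and $\frac{b'}{b+b'}$. It therefore also lies above the minimum of the two ratios, although only the upper bound is needed for the approximation-ratio arguments that follow.
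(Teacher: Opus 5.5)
Your proof is correct and complete. Writing $M$ for the maximum, you multiply $a/b \le M$ and $a'/b' \le M$ by the positive denominators, add, and divide by $b+b'>0$; this is the standard justification of the mediant inequality, which the paper states as Fact~\ref{fact:1} without proof. Your side remarks are also accurate: only $b,b'>0$ is needed, and the mediant is the convex combination of $a/b$ and $a'/b'$ with weights $b/(b+b')$ and $b'/(b+b')$.
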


The following two lemmas establish a lower bound on the optimal solution and an upper bound on the LRF algorithm's performance, respectively.
\begin{lem}\label{lem:lem1}
Let $WCT_{OPT}$ represent the minimum total weighted completion time from the optimal solution. We have
\begin{eqnarray*}
WCT_{OPT} \geq \frac{1}{m}M_{1}+\frac{m-1}{2m}M_{n}+M_{\Lambda^*}
\end{eqnarray*} 
where 
\begin{eqnarray*}
M_{1}=\sum_{j=1}^{n}\sum_{k\leq j} w_{j}p_{k}, M_{n}=\sum_{j=1}^{n}w_{j}p_{j},
\end{eqnarray*}
\begin{eqnarray*}
M_{\Lambda^*}=  2\cdot\phi_{min} \sum_{j=1}^{n}w_{j}.
\end{eqnarray*}
\end{lem}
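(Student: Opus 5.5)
The plan is to split the optimal objective into a contact part and a processing part and bound each separately. Throughout, I assume the tasks are indexed $1,\dots,n$ in non-increasing order of the Smith ratio $\rho_j=w_j/p_j$, as in Algorithm~\ref{Alg_LRF}; this is the indexing under which $M_1$ is defined. Tasks with $p_j=0$ are placed first. They can be handled by a perturbation $p_j\to\epsilon$ followed by $\epsilon\to 0$, since both sides of the inequality are continuous in the $p_j$.

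\textbf{Step 1 (contact part).} Fix an optimal partition $\{\mathcal{J}_1,\dots,\mathcal{J}_m\}$, with each $\mathcal{J}_i$ processed in some order. For $j\in\mathcal{J}_i$, the completion time is $C_j=2\phi_i+(\text{processing of } j \text{ and its predecessors on } i)$. Since $2\phi_i\ge 2\phi_{min}$, the contact part contributes at least $2\phi_{min}\sum_j w_j=M_{\Lambda^*}$. It therefore remains to show that the processing part is at least $\frac{1}{m}M_1+\frac{m-1}{2m}M_n$. This is exactly the classical Eastman--Even--Isaacs bound for identical machines. I will reprove it so that the argument is self-contained.

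\textbf{Step 2 (single-worker cost).} The processing part on worker $i$ is minimised by Smith's rule~\cite{smith1956}. Its cost is therefore at least
\[
f(S)=\sum_{j\in S}\sum_{k\in S,\,k\le j} w_j p_k ,\qquad S=\mathcal{J}_i .
\]
Write $w_j p_k=\rho_j p_j p_k$ and telescope $\rho_j=\sum_{t\ge j}\delta_t$, where $\delta_t=\rho_t-\rho_{t+1}\ge 0$ and $\rho_{n+1}=0$. With $P_t(S)=\sum_{k\in S,\,k\le t}p_k$, this gives the representation
\[
f(S)=\sum_{t=1}^{n}\delta_t\Bigl(\tfrac12 P_t(S)^2+\tfrac12\sum_{k\in S,\,k\le t}p_k^2\Bigr).
\]
It follows because $\sum_{k\le j\le t}p_jp_k$ equals half the square of the sum plus half the sum of squares.

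\textbf{Step 3 (convexity across workers).} For each $t$, the sets $\mathcal{J}_i$ partition $\mathcal{J}$, so $\sum_i P_t(\mathcal{J}_i)=P_t(\mathcal{J})$. Cauchy--Schwarz then gives $\sum_i P_t(\mathcal{J}_i)^2\ge P_t(\mathcal{J})^2/m$. The sum-of-squares terms add up exactly to $\sum_{k\le t}p_k^2$. Summing over $i$ and using $\delta_t\ge0$,
\[
\sum_i f(\mathcal{J}_i)\ \ge\ \sum_t\delta_t\Bigl(\tfrac{1}{2m}P_t(\mathcal{J})^2+\tfrac12\sum_{k\le t}p_k^2\Bigr).
\]
Applying the same representation to $S=\mathcal{J}$ gives $f(\mathcal{J})=M_1$ and $\sum_t\delta_t\sum_{k\le t}p_k^2=\sum_k\rho_kp_k^2=M_n$. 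Hence the right-hand side equals $\frac1m\bigl(M_1-\frac12M_n\bigr)+\frac12M_n=\frac1mM_1+\frac{m-1}{2m}M_n$. Adding the bound from Step~1 yields the lemma.

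The main obstacle is Step~2: obtaining the exact quadratic representation so that the convexity step applies cleanly, and checking that $\delta_t\ge 0$. The latter is precisely where the Smith-ratio indexing is needed. Everything else is routine bookkeeping.
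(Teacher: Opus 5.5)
Your proof is correct, and its top-level structure matches the paper's. The paper also splits off the contact contribution $2\phi_{min}\sum_j w_j = M_{\Lambda^*}$ and bounds what remains by the optimum of the identical-worker problem without communication times.

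The difference is in how that remaining bound is obtained. The paper invokes the Eastman--Even--Isaacs bound $OPT \geq \frac{1}{m}M_1+\frac{m-1}{2m}M_n$ as a black box. Your Steps~2--3 instead reprove it, using the telescoping $\rho_j=\sum_{t\ge j}\delta_t$, the quadratic identity for $f(S)$, and Cauchy--Schwarz across workers.

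The paper's version buys brevity. Yours is self-contained and makes explicit two points the paper leaves implicit:
\begin{itemize}
\item $M_1$ must be computed with tasks indexed in non-increasing Smith-ratio order. The bound genuinely depends on this: for $m=1$, any other order can make $M_1$ exceed the single-worker optimum.
\item You state how tasks with $p_j=0$ are handled.
\end{itemize}

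Bounding the processing part of a fixed $WCT$-optimal partition directly is also a slightly cleaner way to justify the paper's step $WCT_{OPT}\ge OPT+M_{\Lambda^*}$.
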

\begin{proof}
Let $OPT$ represent the value of the optimal solution when communication time is not considered. 
According to Eastman \textit{et al.}~\cite{eastman1964}, we have 
\begin{eqnarray*}
OPT \geq \frac{1}{m}M_{1}+\frac{m-1}{2m}M_{n}
\end{eqnarray*} 
When communication time is considered, each task $j$ incurs an additional cost of at least $2\cdot\phi_{min}\cdot w_{j}$. Therefore, we have $WCT_{OPT} \geq OPT + M_{\Lambda^*}$, which proves the lemma.
\end{proof}

\begin{lem}\label{lem:lem2}
Let $WCT_{LRF}$ denote the corresponding value obtained by the LRF algorithm. We have
\begin{eqnarray*}
WCT_{LRF} < \frac{1}{m}M_{1}+\frac{m-1}{m}M_{n}+M_{\Lambda}
\end{eqnarray*} 
where 
\begin{eqnarray*}
M_{1}=\sum_{j=1}^{n}\sum_{k\leq j} w_{j}p_{k}, M_{n}=\sum_{j=1}^{n}w_{j}p_{j},
\end{eqnarray*}
\begin{eqnarray*}
M_{\Lambda}= 2\cdot\phi_{max} \sum_{j=1}^{n}w_{j}.
\end{eqnarray*}
\end{lem}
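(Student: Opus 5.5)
We will bound the completion time of each task $j$ directly, using the greedy choice made by LRF at the moment $j$ is assigned. Index the tasks in the LRF order, i.e., non-increasing Smith ratio $w_j/p_j$, so that tasks $1,\dots,j-1$ are already placed when task $j$ is considered. Let $i^*$ be the crowd worker chosen for $j$. Then
\[
C_j = EW_{i^*} + p_j ,
\]
where $EW_{i^*}$ is the load just before $j$ is added. This matches the definition of $EW_i$: the two EMT halves plus the processing of $j$ and its predecessors on $i^*$. The key step is to bound $EW_{i^*}$ by an average. Because $i^*$ minimizes current load,
\[
EW_{i^*}\le \frac1m\sum_{i\in\mathcal{M}} EW_i=\frac1m\Big(\sum_{i}2\phi_i+\sum_{k<j}p_k\Big)\le 2\phi_{max}+\frac1m\sum_{k<j}p_k .
\]
Substituting gives
\[
C_j\le 2\phi_{max}+\frac1m\sum_{k\le j}p_k+\frac{m-1}{m}p_j .
\]

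Next we multiply by $w_j$ and sum over $j$. The three terms produce $M_\Lambda=2\phi_{max}\sum_j w_j$, $\frac1m M_1$ and $\frac{m-1}{m}M_n$ respectively, which is the claimed form. The convention that each worker's $EW_i$ already contains the $2\phi_i$ it would incur is consistent with how Algorithm~\ref{Alg_LRF} initializes loads. For a worker that ends up with tasks, the completion times also include exactly $2\phi_i$ on top of the cumulative processing, so the identity $C_j=EW_{i^*}+p_j$ holds.

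The main obstacle is the \emph{strict} inequality. The averaging step is an equality only when all current loads are equal. The step $\frac1m\sum_i 2\phi_i\le 2\phi_{max}$ is strict unless all $\phi_i$ coincide. I would argue strictness from the last task, or any task with $p_j>0$ and $w_j>0$, in one of two ways. If $m\ge 2$ and some earlier processing time is positive, then the loads cannot all be equal at every step. Otherwise we absorb a slack, e.g. note that the first task placed sees $EW_{i^*}=2\phi_{min}$, which is strictly below the bound unless everything is degenerate. If strictness fails in some degenerate instance (e.g. $m=1$ or all $w_j=0$), I would fall back on stating $\le$. This gap does not affect the later use of the lemma, which only needs $\le$ combined with Lemma~\ref{lem:lem1} and Fact~\ref{fact:1}.
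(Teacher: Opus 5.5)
Your argument is correct and is essentially the paper's: bound the chosen worker's current load by the average load, which is at most $2\phi_{max}+\frac{1}{m}\sum_{k<j}p_k$, then add $p_j$ and sum with the weights. Your index $k<j$ is exactly what produces the $\frac{m-1}{m}M_n$ term; the paper writes $k\le j$ and treats $j\le m$ separately, but the idea is the same.

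You are also right to distrust the strict inequality. It genuinely fails in degenerate cases: for $m=1$, or for $n=1$ with all $\phi_i$ equal, the bound holds with equality. So only $\le$ is provable in general, and that is all the subsequent approximation theorem needs.
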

\begin{proof}
Let $a_{l}$ denote the current expected workload of a crowd worker $l$. Task $j$ is assigned to crowd worker $l$ for which
\begin{eqnarray*}
a_{l} & =   & \min_{1\leq k \leq m} a_{k} \\
      &\leq & \frac{1}{m} \sum_{1\leq k \leq m} a_{k} \\
      &\leq & 2\cdot\phi_{max}+\frac{1}{m} \sum_{k\leq j} p_{k}. \\
\end{eqnarray*}

When all $n$ tasks have been assigned, it must follow that $WCT_{LRF}$ will satisfy
\begin{eqnarray*}
WCT_{LRF} < \frac{1}{m}(M_{1}-M_{n})+M_{n}+M_{\Lambda}
\end{eqnarray*} 
since the contribution to penalty cost resulting from task $j$ is at most
\begin{eqnarray*}
w_{j}\left(2\cdot\phi_{max}+\frac{1}{m} \sum_{k\leq j} p_{k}+p_{j}\right) 
\end{eqnarray*}
for $m<j\leq n$ and
\begin{eqnarray*}
w_{j}\left(2\cdot\phi_{max}+p_{j}\right) 
\end{eqnarray*}
for $1\leq j\leq m$. Hence, the lemma follows.
\end{proof}

The following theorem analyzes the approximation ratio of LRF algorithm.
\begin{thm}\label{thm:thm2}
Let $WCT_{OPT}$ represent the minimum total weighted completion time from the optimal solution, and let $WCT_{LRF}$ denote the corresponding value obtained by the LRF algorithm. Therefore, it follows that
\begin{eqnarray*}\label{ineq:02}
\frac{WCT_{LRF}}{WCT_{OPT}} \leq \max\left\{\frac{3}{2},\frac{\phi_{max}}{\phi_{min}}\right\}.
\end{eqnarray*} 
\end{thm}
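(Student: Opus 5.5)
The plan is to combine Lemma~\ref{lem:lem2} (the upper bound on $WCT_{LRF}$) with Lemma~\ref{lem:lem1} (the lower bound on $WCT_{OPT}$) and then compare the two bounds term by term using Fact~\ref{fact:1}. This gives
\begin{eqnarray*}
\frac{WCT_{LRF}}{WCT_{OPT}} \leq \frac{\frac{1}{m}M_{1}+\frac{m-1}{m}M_{n}+M_{\Lambda}}{\frac{1}{m}M_{1}+\frac{m-1}{2m}M_{n}+M_{\Lambda^*}}.
\end{eqnarray*}
I will split the numerator and the denominator into two matched pieces. The first piece is the processing-time part, $A=\frac{1}{m}M_{1}+\frac{m-1}{m}M_{n}$ over $B=\frac{1}{m}M_{1}+\frac{m-1}{2m}M_{n}$. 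The second piece is the communication part, $M_{\Lambda}$ over $M_{\Lambda^*}$. Fact~\ref{fact:1} then bounds the whole ratio by $\max\{A/B,\,M_{\Lambda}/M_{\Lambda^*}\}$.

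The communication ratio is immediate. Both $M_\Lambda$ and $M_{\Lambda^*}$ carry the same factor $\sum_j w_j$, so
\begin{eqnarray*}
\frac{M_{\Lambda}}{M_{\Lambda^*}}=\frac{2\phi_{max}\sum_j w_j}{2\phi_{min}\sum_j w_j}=\frac{\phi_{max}}{\phi_{min}}.
\end{eqnarray*}
The weights cancel here precisely because both bounds use the same weighted sum. This cancellation is what removes the $w_{max}/w_{min}$ factor that appeared in the earlier analysis of \cite{chen2025approx}.

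For $A/B$, the key inequality is $M_{1}\geq M_{n}$, which holds because $M_1$ contains the diagonal terms $k=j$ together with further nonnegative terms. Write $x=M_1/M_n\geq 1$ when $M_n>0$. Then
\begin{eqnarray*}
\frac{A}{B}=\frac{2x+2(m-1)}{2x+(m-1)}.
\end{eqnarray*}
This expression is decreasing in $x$, so it is maximized at $x=1$, where it equals $\frac{2m}{m+1}$. That value is less than $2$ but not at most $\frac{3}{2}$ when $m>3$, so this step needs a closer look.

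Getting $\frac{3}{2}$ for the processing part is the main obstacle. My first attempt will be to avoid treating the processing part with the crude bound from Lemma~\ref{lem:lem2}. Instead I will compare LRF's processing contribution directly with the classical Eastman et al.\ bound. Eastman et al.\ showed that list scheduling in Smith order is within $\frac{3}{2}$ of the lower bound $\frac{1}{m}M_1+\frac{m-1}{2m}M_n$, using the refined estimate of the start time of job $j$ that is already implicit in Lemma~\ref{lem:lem2}. Equivalently, I would argue that $\frac{m-1}{m}M_n\leq \frac{3}{2}\cdot\frac{m-1}{2m}M_n+\frac{1}{2}\cdot\frac{1}{m}M_1$, which rearranges to $\frac{m-1}{4m}M_n\leq\frac{1}{2m}M_1$, that is, $(m-1)M_n\leq 2M_1$.

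To close this step I will aim to show $M_1\geq\frac{m-1}{2}M_n$ in the regime where it matters. I would do this by splitting the sum in $M_1$ over the Smith-ordered prefixes. Where that fails, namely when only a few jobs dominate, I would handle the case separately. Jobs with $j\leq m$ start immediately, so their cost is $w_j(p_j+2\phi)$, which the optimum also pays. Subtracting these shared terms from both the numerator and the denominator before applying Fact~\ref{fact:1} should reduce the processing ratio to $\frac{3}{2}$. The final bound then follows as $\max\{\frac{3}{2},\frac{\phi_{max}}{\phi_{min}}\}$.
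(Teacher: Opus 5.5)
\textbf{There is a genuine gap in your last step.} Your computation that the processing ratio $A/B=\frac{2x+2(m-1)}{2x+(m-1)}$ can reach $\frac{2m}{m+1}$ is correct. It also exposes a problem in the paper's own proof. The paper makes exactly your Fact~\ref{fact:1} split and closes it by citing $M_n/M_1\le\frac{2}{n+1}$. In Smith order the true inequality runs the other way, $M_1\le\frac{n+1}{2}M_n$, because for $k<j$ one has $w_jp_k\le\sqrt{w_jp_j\,w_kp_k}$. Concretely, take all $p_j=1$, one weight $1$ and $n-1$ weights $\epsilon$; then $M_n/M_1\to 1$. So $A\le\frac32 B$, equivalently $(m-1)M_n\le 2M_1$, is false in general, and your proposal tries to prove it anyway.

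Your repair does not work, for three reasons:
\begin{itemize}
\item Take $n-1$ tiny tasks with $w_k=p_k=\delta$, followed in Smith order by one task with $p=1$, $w=1-\delta$. This gives $M_1\approx M_n\approx 1$ with the mass on a task of index $>m$, so discarding the first $m$ tasks changes nothing.
\item Subtracting a common amount from the numerator and denominator of a ratio above $1$ only increases it.
\item With unequal $\phi_i$, the first $m$ tasks need not land on distinct workers. Nor does LRF pay only the $2\phi_{min}$ that Lemma~\ref{lem:lem1} credits.
\end{itemize}
Your premise that Smith-order list scheduling is within $\frac32$ of $\frac1m M_1+\frac{m-1}{2m}M_n$ is also false. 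In the first example, LRF's processing cost is about $1$ while that quantity is about $\frac{m+1}{2m}$.

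\textbf{The missing ingredient is a second lower bound.} Each task completes no earlier than $2\phi_{min}+p_j$, so $WCT_{OPT}\ge M_n+M_{\Lambda^*}$. The paper uses this same bound in the proof of Theorem~\ref{thm:thm3}. Instead of Fact~\ref{fact:1}, take a convex combination. Let $c=\max\{\frac32,\frac{\phi_{max}}{\phi_{min}}\}$, and weight Lemma~\ref{lem:lem1} by $\frac23$ and this bound by $\frac13$. This gives
\begin{align*}
c\,WCT_{OPT} &\ge c\Big(\tfrac{2}{3m}M_1+\tfrac{2m-1}{3m}M_n+M_{\Lambda^*}\Big)\\
&\ge \tfrac{1}{m}M_1+\tfrac{2m-1}{2m}M_n+M_{\Lambda}\ \ge\ \tfrac{1}{m}M_1+\tfrac{m-1}{m}M_n+M_{\Lambda}\ >\ WCT_{LRF}.
\end{align*}
The second inequality uses $c\ge\frac32$ on the nonnegative processing terms, together with $cM_{\Lambda^*}\ge\frac{\phi_{max}}{\phi_{min}}M_{\Lambda^*}=M_\Lambda$; that is your weight cancellation. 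The last inequality is Lemma~\ref{lem:lem2}. This proves the theorem without ever bounding $A/B$.
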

\begin{proof}
Based on the bounds in lemmas~\ref{lem:lem1} and \ref{lem:lem2}, we have
\begin{eqnarray*}\label{ineq:03}
WCT_{LRF} < \frac{1}{m}M_{1}+\frac{m-1}{m}M_{n}+M_{\Lambda}
\end{eqnarray*} 
and 
\begin{eqnarray*}\label{ineq:04}
WCT_{OPT} \geq \frac{1}{m}M_{1}+\frac{m-1}{2m}M_{n}+M_{\Lambda^*}
\end{eqnarray*} 
where
\begin{eqnarray*}\label{ineq:05}
M_{1}=\sum_{j=1}^{n}\sum_{k\leq j} w_{j}p_{k}, M_{n}=\sum_{j=1}^{n}w_{j}p_{j},
\end{eqnarray*}
\begin{eqnarray*}\label{ineq:06}
M_{\Lambda}= 2\cdot\phi_{max}\sum_{j=1}^{n}w_{j}, M_{\Lambda^*}= 2\cdot\phi_{min}\sum_{j=1}^{n}w_{j}.
\end{eqnarray*}
Then we have 
\begin{eqnarray*}\label{ineq:07}
\frac{WCT_{LRF}}{WCT_{OPT}} & < &\frac{\frac{1}{m}M_{1}+\frac{m-1}{m}M_{n}+M_{\Lambda}}{\frac{1}{m}M_{1}+\frac{m-1}{2m}M_{n}+M_{\Lambda^*}} \\
 &\leq & \max\left\{\frac{\frac{1}{m}M_{1}+\frac{m-1}{m}M_{n}}{\frac{1}{m}M_{1}+\frac{m-1}{2m}M_{n}},\frac{M_{\Lambda}}{M_{\Lambda^*}}\right\}.
\end{eqnarray*}
The second inequality follows from Fact~\ref{fact:1}. Then we have
\begin{eqnarray*}
\frac{\frac{1}{m}M_{1}+\frac{m-1}{m}M_{n}}{\frac{1}{m}M_{1}+\frac{m-1}{2m}M_{n}}\leq 2-\frac{2}{2+(m-1)\alpha}
\end{eqnarray*}
where $\alpha=\frac{M_{n}}{M_{1}}$. According to~\cite{eastman1964}, it follows that $\alpha\leq \frac{2}{n+1}$ which yield
\begin{eqnarray*}
2-\frac{2}{2+(m-1)\alpha}\leq 2-\frac{n+1}{n+m} \leq 2-\frac{n+1}{2n} \leq \frac{3}{2}.
\end{eqnarray*}

Therefore, we have
\begin{eqnarray}\label{ineq:08}
\frac{\frac{1}{m}M_{1}+\frac{m-1}{m}M_{n}}{\frac{1}{m}M_{1}+\frac{m-1}{2m}M_{n}}\leq \frac{3}{2}.
\end{eqnarray}
We also have 
\begin{eqnarray}\label{ineq:09}
\frac{M_{\Lambda}}{M_{\Lambda^*}} \leq \frac{\phi_{max}}{\phi_{min}}.
\end{eqnarray}
By combining inequalities~(\ref{ineq:08}) and (\ref{ineq:09}), we obtain 
\begin{eqnarray*}\label{ineq:10}
\frac{WCT_{LRF}}{WCT_{OPT}} \leq \max\left\{\frac{3}{2},\frac{\phi_{max}}{\phi_{min}}\right\}.
\end{eqnarray*} 
\end{proof}

\subsection{Online Task Scheduling}
In previous studies \cite{chen2025approx, Zhang2025}, an offline algorithm was adapted into an online version. When a requester interacts with a crowd worker $i \in \mathcal{M}$, the algorithm pre-schedules all remaining tasks across all crowd workers who have not yet been assigned tasks. It then assigns the specific tasks designated for crowd worker $i$ in that schedule. Notably, any offline algorithm can be reformulated as an online algorithm. The procedure introduced in Algorithm \ref{Alg_ONLRF}, known as the Cost Minimized Online Scheduling (CosMOS) algorithm, represents this online approach. Lines \ref{Alg_ONLRF:1}–\ref{Alg_ONLRF:3} handle the initialization of all variables. Next, in line \ref{Alg_ONLRF:4}, we utilize the offline algorithm $A$ to pre-schedule tasks for the crowd workers in set $\mathcal{M}$ and assign these tasks to crowd worker $i$. The offline algorithm $A$ can be any existing method, such as the LRF algorithm. In line \ref{Alg_ONLRF:5}, the tasks assigned to crowd worker $i$ are processed according to the Smith ratio order. Finally, in lines \ref{Alg_ONLRF:6}–\ref{Alg_ONLRF:7}, the tasks assigned to crowd worker $i$ are removed from the set of unscheduled tasks, and crowd worker $i$ is eliminated from the set $\mathcal{M}$. The time complexity of the algorithm is determined by the specific offline algorithm $A$ that is employed.

\begin{algorithm}[t]
\caption{The CosMOS Algorithm}
\label{Alg_ONLRF}
\begin{algorithmic}[1]
    \REQUIRE Set of crowd workers $\mathcal{M}$, set of tasks $\mathcal{J}$ \label{Alg_ONLRF:1}
    \ENSURE Final schedule $\Lambda_{CosMOS} = \{\mathcal{J}_1, \mathcal{J}_2, \ldots, \mathcal{J}_m\}$

    \WHILE{requester meets crowd worker $i \in \mathcal{M}$ on time $t_{i}$}
        \STATE $\mathcal{J}_i \leftarrow \emptyset, \quad EW_i \leftarrow \phi_{i}$
        \FOR{each $k \in \mathcal{M} \setminus \{i\}$}
            \STATE $\mathcal{J}_k \leftarrow \emptyset, \quad EW_k \leftarrow 2\phi_{k}-t_{i}$ \label{Alg_ONLRF:2}
        \ENDFOR \label{Alg_ONLRF:3}
				\STATE Apply offline algorithm $A$ to allocate the set of tasks $\mathcal{J}_{i}$ to crowd worker $i\in \mathcal{M}$. \label{Alg_ONLRF:4}
        \STATE Assign $\mathcal{J}_i$ to crowd worker $i\in \mathcal{M}$ and schedule using Smith ratio order \label{Alg_ONLRF:5}
        \STATE $\mathcal{J} \leftarrow \mathcal{J} \setminus \mathcal{J}_i$  \label{Alg_ONLRF:6}
        \STATE $\mathcal{M} \leftarrow \mathcal{M} \setminus \{i\}$ \label{Alg_ONLRF:7}
    \ENDWHILE
    
    \RETURN $\Lambda_{CosMOS} = \{\mathcal{J}_1, \mathcal{J}_2, \ldots, \mathcal{J}_m\}$
\end{algorithmic}
\end{algorithm}



The following theorem, proved by Zhang \textit{et al.}~\cite{Zhang2025}, shows that the total weighted completion time is non-increasing over successive decision steps.
\begin{thm}\label{thm:thm4}
\cite{Zhang2025} In the CosMOS algorithm, the total weighted completion time (WCT) is non-increasing over successive decision steps. Specifically, after each step, the WCT either remains unchanged or decreases; that is, $WCT_{A} \geq WCT_{CosMOS}$.
\end{thm}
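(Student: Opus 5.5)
The plan is an induction on the decision steps of the CosMOS algorithm. The quantity tracked is the total weighted completion time of a \emph{hybrid} schedule, which is already fixed for the crowd workers met so far and pre-scheduled by $A$ for the remaining ones. Let $S_0$ be the schedule that $A$ produces on the full instance at time $0$, so $WCT(S_0)=WCT_A$. Let $t_1\le t_2\le\cdots$ be the meeting times and $i_1,i_2,\ldots$ the workers met. After step $s$, let $S_s$ be the schedule in which:
\begin{itemize}
\item workers $i_1,\ldots,i_s$ keep the task sets $\mathcal{J}_{i_1},\ldots,\mathcal{J}_{i_s}$ actually committed to them, each processed in Smith-ratio order;
\item the remaining tasks are distributed among the remaining workers according to the pre-schedule that $A$ computes at step $s$.
\end{itemize}
The final $S_m$ is exactly $\Lambda_{CosMOS}$. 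The theorem therefore follows from the monotonicity $WCT(S_{s})\le WCT(S_{s-1})$ for every $s$, chained from $s=1$ to $m$.

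To prove one step, fix $s$ and write $i=i_s$ and $t=t_s$. First I would note that at time $t$ the committed part of $S_{s-1}$ (workers $i_1,\ldots,i_{s-1}$) is identical in $S_s$, so those terms cancel. What remains is to compare two schedules of the residual tasks $\mathcal{J}'$ on the residual workers $\mathcal{M}'$ in the residual instance defined by lines~\ref{Alg_ONLRF:2}--\ref{Alg_ONLRF:3}:
\begin{itemize}
\item worker $i$ has its initial contact already realized, so its offset is $\phi_i$;
\item every other worker $k$ has remaining offset $2\phi_k-t$.
\end{itemize}
The restriction of $S_{s-1}$ to $(\mathcal{J}',\mathcal{M}')$ is a feasible schedule for this residual instance. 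Its cost, measured with the updated offsets, is no larger than its cost under the original offsets, because every offset has weakly decreased now that time $t$ has elapsed and part of the expected contact time has been consumed. The step then needs the defining property that re-running $A$ on the residual instance does not do worse than this feasible schedule. Finally, ordering the committed set $\mathcal{J}_i$ by Smith's rule is optimal on a single worker, by Smith's exchange argument, so line~\ref{Alg_ONLRF:5} cannot increase cost either.

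The main obstacle is that $A$ is an arbitrary approximation algorithm, not an exact one. Re-running it is therefore not automatically better than keeping the old pre-schedule. I would close this gap in the way the statement implicitly does, following Zhang \textit{et al.}~\cite{Zhang2025}: interpret the step as choosing, among the fresh output of $A$ and the continuation of the previous pre-schedule, the cheaper one, or equivalently assume that $A$ is monotone under decreased offsets, as LRF is. Once the hybrid cost is shown not to increase at this step, monotonicity at every step gives $WCT_A=WCT(S_0)\ge WCT(S_m)=WCT_{CosMOS}$.
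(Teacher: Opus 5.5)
The step you call the main obstacle is the whole content of the theorem, and neither of your ways of closing it works.

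Taking the cheaper of $A$'s fresh output and the continuation of the previous pre-schedule proves monotonicity for a different algorithm: Algorithm~\ref{Alg_ONLRF} always commits $A$'s fresh pre-schedule. Assuming $A$ is ``monotone under decreased offsets'' is a hypothesis the statement does not contain, since Theorem~\ref{thm:thm3} invokes the result for an arbitrary $\alpha$-approximation. It is also neither equivalent to nor sufficient for your step. Monotonicity compares $A$ with itself on two instances with the same tasks and workers. Your step instead compares $A$ on the step-$s$ residual instance with the \emph{restriction} of $A$'s step-$(s-1)$ output, which was computed on a larger instance still containing worker $i_{s-1}$ and $\mathcal{J}_{i_{s-1}}$. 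So you also need restriction consistency: deleting a worker with its assigned tasks and re-running $A$ is no worse than restricting $A$'s old output. For LRF on identical workers both properties hold and are short to prove. The multiset of loads evolves by adding $p_j$ to the current minimum; deleting the committed worker and its tasks does not change how the other loads evolve; and the sorted load vector is coordinatewise monotone in the initial offsets. For a general $\alpha$-approximation neither property holds, and the conclusion can fail. Take an $A$ that is optimal on the time-$0$ instance but returns a strictly worse, still $\alpha$-approximate schedule on the perturbed instance it sees at the first meeting. Choose all $\phi_i$ and $t_1$ small relative to the $p_j$, so the perturbation barely moves the optimum.

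Second, the claim that every offset has weakly decreased fails for the worker just met. In the step-$(s-1)$ residual instance, worker $i=i_s$ has offset $2\phi_i-t_{s-1}$, and CosMOS resets it to $\phi_i$. This is a decrease only if $t_{s-1}\le\phi_i$. After a late meeting it is an increase, and then even an exact $A$ (or LRF) raises your tracked cost. For example, at the last step all residual tasks are forced onto $i_m$, in the same Smith order, but with a larger offset. Relatedly, you compare residual costs measured from $t_{s-1}$ with costs measured from $t_s$, so part of the apparent decrease is only a shift of the time origin. Before chaining $WCT(S_s)\le WCT(S_{s-1})$ you must fix one meaning of WCT. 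In absolute time the unmet workers' offsets are unchanged ($t_s+2\phi_k-t_s=2\phi_k$), while the met worker's becomes $t_s+\phi_i$, which exceeds $2\phi_i$ whenever $t_s>\phi_i$.

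The paper gives no proof of its own here, only the citation to Zhang et al. A sound argument has to restrict $A$, either to LRF with the two properties above or to an exact algorithm. It must also add an assumption, or an expectation argument, that controls the met worker's reset offset.
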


The following theorem is adapted from our previous study \cite{chen2025approx}. It specifically updates the lower bound of the optimal solution, allowing for tighter guarantees on the competitive ratio. This theorem presents two upper bounds on the competitive ratio: one tighter and the other involving fewer parameters, but is more relaxed.
\begin{thm}\label{thm:thm3}
Suppose there exists an entity with full knowledge of the mobility of all crowd workers, specifically knowing the exact times of each meeting between the requester and the crowd workers in advance. Based on this information, the entity can determine the offline optimal task scheduling decisions, denoted by $\Lambda_{OPT} = \{\mathcal{J}_1^*, \mathcal{J}_2^*, \dots, \mathcal{J}_m^* \}$. Furthermore, assume an offline algorithm $A$ exists with an approximation ratio of $\alpha$. Then we have 
\begin{eqnarray*}\label{ineq:11}
\frac{WCT_{CosMOS}}{WCT_{OPT}} & \leq & \alpha\left(1 + \frac{\Phi_{max} \sum_{j=1}^{n}w_{j}}{\sum_{j=1}^{n}w_{j}(\Phi_{min}+p_{j})}\right) \\
                               & \leq & \alpha\left(1 + \frac{\Phi_{max}}{\Phi_{min}+p_{min}}\right)
\end{eqnarray*} 
where $p_{min}$ is the minimum required service time.
\end{thm}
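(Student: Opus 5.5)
Our plan is to compare $WCT_{CosMOS}$ with $WCT_{A}$ using Theorem~\ref{thm:thm4}, to compare $WCT_{A}$ with an offline optimum $WCT^{off}_{OPT}$ through the approximation ratio $\alpha$, and finally to compare $WCT^{off}_{OPT}$ with the clairvoyant optimum $WCT_{OPT}$.

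First, Theorem~\ref{thm:thm4} gives $WCT_{CosMOS}\le WCT_A$. By assumption, $WCT_A\le \alpha\, WCT^{off}_{OPT}$, where $WCT^{off}_{OPT}$ is the optimum of the offline instance with expected contact times. It therefore suffices to show
\[
WCT^{off}_{OPT}\le WCT_{OPT}+\Phi_{max}\sum_{j=1}^{n} w_j
\]
and
\[
WCT_{OPT}\ge \sum_{j=1}^{n} w_j(\Phi_{min}+p_j).
\]
These two bounds together give
\[
\frac{WCT_{CosMOS}}{WCT_{OPT}}\le \alpha\,\frac{WCT_{OPT}+\Phi_{max}\sum_j w_j}{WCT_{OPT}}\le \alpha\left(1+\frac{\Phi_{max}\sum_j w_j}{\sum_j w_j(\Phi_{min}+p_j)}\right).
\]

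For the upper bound on $WCT^{off}_{OPT}$, take the clairvoyant partition $\Lambda_{OPT}$ and evaluate it in the offline model. Each task $j$ keeps its processing-time prefix, which is already present in $WCT_{OPT}$. Its contact overhead is replaced by $2\phi_i\le \Phi_{max}$, so its cost rises by at most $w_j\Phi_{max}$. Summing over $j$ and using optimality of $WCT^{off}_{OPT}$ gives the first bound.

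For the lower bound on $WCT_{OPT}$, every task $j$ needs at least the two contacts with its worker and its own processing time. Hence $C_j\ge \Phi_{min}+p_j$ in the clairvoyant optimum, and summing $w_jC_j$ gives the second bound. This uses the model convention that contact overheads are at least the expected values $\phi_i$, which we adopt as in~\cite{chen2025approx}.

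The second inequality in the statement follows from $p_j\ge p_{min}$: then $\sum_j w_j(\Phi_{min}+p_j)\ge(\Phi_{min}+p_{min})\sum_j w_j$, and the weights cancel.

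The main obstacle is the comparison step between the clairvoyant optimum and the expected-time offline instance. We must justify that realized contact times lie between $\Phi_{min}$ per task for the lower bound and at most $\Phi_{max}$ of extra overhead per task for the upper bound. To keep this step robust, we will bound the two contact intervals per task directly by $\Phi_{min}$ and $\Phi_{max}$, as in the proof of Lemma~\ref{lem:lem1}, rather than relying on distributional properties.
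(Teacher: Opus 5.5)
Your proposal is correct and follows the paper's proof essentially step for step: Theorem~\ref{thm:thm4} takes you from CosMOS to $A$, and the ratio $\alpha$ compares $A$ against the expected-contact-time offline instance. You then evaluate the clairvoyant partition $\Lambda_{OPT}$ with $\Phi_i$ in place of the realized $t_i+t'_i$ to get the additive $\Phi_{max}\sum_j w_j$, and finish with the lower bound $WCT_{OPT}\ge\sum_j w_j(\Phi_{min}+p_j)$. The differences are only cosmetic: you route explicitly through $WCT^{off}_{OPT}$, which properly justifies the paper's step $WCT_A\le\alpha\, WCT'$ for the transplanted schedule $\Lambda'$, and, like the paper, you take $t_i+t'_i\ge\Phi_{min}$ as a modeling assumption rather than deriving it.
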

\begin{proof}
The following proof builds upon the methodology presented in~\cite{chen2025approx, Zhang2025}. For any arbitrary scheduling decision $\Lambda=\left\{\mathcal{J}_{1}, \mathcal{J}_{2}, \ldots, \mathcal{J}_{m}\right\}$, the total weighted completion time is given by:
\begin{eqnarray*}\label{ineq:12}
WCT_{\Lambda}=\sum_{i=1}^{m}\sum_{j\in \mathcal{J}_{i}} w_{j}(t_{i}+t'_{i}+T_{j}+p_{j}),
\end{eqnarray*} 
where $t_{i}$ and $t'_{i}$ denote the actual inter-meeting times between the requester and crowd worker $i \in \mathcal{M}$ for task distribution and feedback delivery, respectively. Furthermore, $T_{j}$ represents the total required service time for all tasks scheduled prior to task $j \in \mathcal{J}_{i}$.

We also have
\begin{eqnarray*}\label{ineq:13}
WCT_{OPT}=\sum_{i=1}^{m}\sum_{j\in \mathcal{J}_{i}^{*}} w_{j}(t_{i}+t'_{i}+T_{j}+p_{j}).
\end{eqnarray*} 

Let $\Lambda'$ denote the scheduling decision for the offline version of CosMOS, obtained by adopting the scheduling policy from $\Lambda_{OPT}$. Thus, we have
\begin{eqnarray*}\label{ineq:14}
WCT'& = & \sum_{i=1}^{m}\sum_{j\in \mathcal{J}_{i}^{*}} w_{j}\left(\Phi_{i}+T_{j}+p_{j}\right) \\
    & = & WCT_{OPT} +\sum_{i=1}^{m}\sum_{j\in \mathcal{J}_{i}^{*}} w_{j}\left(\Phi_{i}-t_{j}-t'_{j}\right) \\
		& \leq  & WCT_{OPT} +\sum_{i=1}^{m}\sum_{j\in \mathcal{J}_{i}^{*}} w_{j}\Phi_{i} \\
		& \leq  & WCT_{OPT} +\Phi_{max}\sum_{j=1}^{n}w_{j}.
\end{eqnarray*} 

According to Theorem \ref{thm:thm4}, we have
\begin{eqnarray*}\label{ineq:15}
WCT_{CosMOS} & \leq & WCT_{LRF} \\
             & \leq & \alpha \cdot WCT' \\
						 & \leq & \alpha \cdot WCT_{OPT}+ \alpha \Phi_{max} \sum_{j=1}^{n}w_{j}.
\end{eqnarray*}

Since $WCT_{OPT}\geq \sum_{j=1}^{n}w_{j}(\Phi_{min}+p_{j})\geq (\Phi_{min}+p_{min})\sum_{j=1}^{n}w_{j}$, we have
\begin{eqnarray*}\label{ineq:16}
\frac{WCT_{CosMOS}}{WCT_{OPT}} & \leq & \alpha\left(1 + \frac{\Phi_{max} \sum_{j=1}^{n}w_{j}}{\sum_{j=1}^{n}w_{j}(\Phi_{min}+p_{j})}\right) \\
& \leq & \alpha\left(1 + \frac{\Phi_{max} \sum_{j=1}^{n}w_{j}}{(\Phi_{min}+p_{min})\sum_{j=1}^{n}w_{j}}\right) \\
& = & \alpha\left(1 + \frac{\Phi_{max}}{\Phi_{min}+p_{min}}\right).
\end{eqnarray*}
\end{proof}

\section{Randomized Approximation Algorithm with Time-indexed Approach}\label{sec:Algorithm3}
This section presents a randomized approximation algorithm for the crowdsourcing scheduling problem. The proposed approach can be derandomized by sequentially assigning each task to minimize the expected total weighted completion time. As a result, this method can be transformed into a deterministic approximation algorithm. Our approach is inspired by the work of Im and Li~\cite{Im2023}. We modify the time-indexed integer programming (IP) formulation considered in~\cite{im2016better, Im2023, li2020scheduling, im2020weighted}. The main modification involves including each crowd worker's total contact time in the completion time of the tasks assigned to them. In the time-indexed IP, we define an indicator variable $x_{ijs}$, where $x_{ijs} = 1$ if task $j$ begins execution on crowd worker $i$ at time $s$. Since we consider non-preemptive scheduling, if $x_{ijs} = 1$, then task $j$ completes at time $\Phi_{i} + s + p_{ij}$, where $\Phi_{i}$ represents the total contact time of crowd worker $i$. Let $T$ be a sufficiently large upper bound on the number of time steps. We assume $T$ is polynomially bounded in the input size, as it has been shown that this assumption is without loss of generality with a loss of a $(1+\epsilon)$ factor in the approximation ratio~\cite{im2016better}. The modified time-indexed IP used in this paper is presented below.

\begin{subequations}\label{LP:1}
\begin{align}
& \text{min}  && \sum_{j \in \mathcal{J}} w_{j} \sum_{i \in \mathcal{M} s\in [0, T)} x_{ijs}(\Phi_{i}+s+p_{ij})     &   & \tag{\ref{LP:1}} \\
& \text{s.t.} && \sum_{i \in \mathcal{M}, s\in [0, T)} x_{ijs} = 1, && \forall j\in \mathcal{J} \label{LP1:a} \\
&  && \sum_{j \in J, s\in [t-p_{ij}, t)} x_{ijs} \leq 1, && \forall i\in \mathcal{M}, \notag \\
&  &&                                                    && t\in [T] \label{LP1:b} \\
&  && x_{ijs} =0, && \forall i\in \mathcal{M}, j\in \mathcal{J},s>T-p_{ij} \label{LP1:c} \\
&  && x_{ijs} \in \left\{0, 1\right\}, && \forall i\in \mathcal{M}, j\in \mathcal{J}, s\in[0, T) \label{LP1:d} 
\end{align}
\end{subequations}

The constraint (\ref{LP1:a}) ensures that every task is assigned to a crowd worker. The constraint (\ref{LP1:b}) guarantees that each crowd worker processes at most one task at any given time, while the constraint (\ref{LP1:c}) requires all tasks to be completed by time $T$. By relaxing the integrality constraint to $x_{ijs} \geq 0$, we obtain a valid linear programming (LP) relaxation. Let $x^*$ represent the optimal solution to this LP. For each $x^*_{ijs} > 0$, it is helpful to visualize it as a rectangle $R_{ijs}$ with a height of $x^*_{ijs}$, spanning the interval $[s, s + p_{ij}]$. When analyzing completion times, it is important to shift each rectangle's timeline by the total contact time $\Phi_{i}$.

A 1.5-approximation to this LP can be obtained via independent rounding. For each task $j$, we independently select a rectangle $R_{ijs}$ with probability $x_{ijs}$, which indicates that task $j$ is assigned to crowd worker $i$. Next, for each task, we uniformly sample a random offset $\tau_j$ from $[0, p_{ij}]$. Given that we have chosen $R_{ijs}$ for task $j$, we define $\theta_j = \tau_j + s$. Finally, all tasks assigned to the same crowd worker are scheduled in non-decreasing order of their $\theta$ values.

To estimate the expected completion time of task $j$, we need to calculate the expected total processing time of tasks assigned to crowd worker $i$ that have $\theta$ values smaller than $\theta_j$. Using the linearity of expectation, we can sum the expected delays contributed by each task $j'$ that is scheduled before task $j$ on crowd worker $i$.

By leveraging the uniform selection of $\theta$ and $\tau$, we can demonstrate that for a fixed $\theta_j$, if task $j'$ selects rectangle $R_{ij's'}$, the expected delay it imposes on task $j$ assigned to crowd worker $i$ is exactly the area of $R_{ij's'}$ within the interval $[0, \theta_j]$. Specifically, under this condition, the probability that $\theta_{j'} < \theta_j$ is given by $\max(0, \theta_j - s') / p_{ij'}$~\cite{im2020weighted}. This probability represents the horizontal length of $R_{ij's'}$ up to $\theta_j$ normalized by $p_{ij'}$. Therefore, the contribution of $R_{ij's'}$ to the expected delay is equal to the area of the portion of $R_{ij's'}$ preceding $\theta_j$.

Since the LP ensures that at most one unit of work is scheduled at any given time $t$, the total area of all rectangles up to $\theta_j$ is bounded by $\theta_j$. Consequently, we have: $$E[C_j \mid \theta_j, R_{ijs}] \leq \Phi_i + \theta_j + p_{ij}.$$ Given that $E[\theta_j \mid R_{ijs}] = s + p_{ij}/2$, we can state that $E[C_j \mid R_{ijs}] \leq \Phi_i + s + 1.5p_{ij}$. By de-conditioning on the selection of $R_{ijs}$, we arrive at the following expression: $$E[C_j] \leq \sum_{i, s} x_{ijs} (\Phi_i + s + 1.5p_{ij}).$$ This, combined with the linearity of expectation, establishes the 1.5-approximation ratio.

Using independent rounding results in only a 1.5-approximation. Breaking through the 1.5 barrier has been a long-standing open problem ~\cite{chekuri2004approximation, kumar2008minimum, schulz2002scheduling, schuurman1999polynomial, sviridenko2013approximating}. This issue was eventually resolved by the breakthrough work of Bansal \textit{et al.}~\cite{bansal2016lift}. They introduced an innovative rounding scheme that creates strong negative correlations among certain task groups. Since minimizing weighted completion time is highly sensitive to mutual task delays, it is crucial to reduce the additional delay introduced by randomized rounding. Ideally, one would want a strong negative correlation for every pair of tasks, such that the probability of tasks $j$ and $j'$ being simultaneously assigned to crowd worker $i$ is strictly less than $x_{ij} \cdot x_{ij'}$. Although it is impossible to satisfy this for all pairs simultaneously, Bansal \textit{et al.} demonstrated that it is achievable when tasks are partitioned into disjoint groups per crowd worker.

Im and Li~\cite{Im2023} introduced a novel Strong Negative Correlation (SNC) scheme. Suppose there exists a vector $y \in [0, 1]^{\mathcal{M} \times \mathcal{J}}$ such that $\sum_{i \in \mathcal{M}} y_{ij} = 1$ for every task $j \in \mathcal{J}$. For each crowd worker $i \in \mathcal{M}$, the edges connected to $i$ are partitioned into several groups, ensuring that the total $y$-value of each group is at most 1. A rounding algorithm with SNC properties randomly assigns each task $j \in \mathcal{J}$ to a crowd worker $i$ while satisfying the following principles:
\begin{itemize}
\item \textbf{Marginal Probabilities:} task $j$ is assigned to crowd worker $i$ with probability exactly $y_{ij}$.
\item \textbf{Non-positive Correlation:} The events of two distinct tasks being assigned to the same crowd worker are non-positively correlated.
\item \textbf{Strong Negative Correlation:} The events of two distinct tasks being assigned to the same group exhibit strong negative correlation.
\end{itemize}
A key difference between the SNC scheme of Im and Li and that of Bansal \textit{et al.} is that their partitions of edges are fractional. That is, a task $j$ on a specific crowd worker can belong to multiple groups simultaneously. In other words, the groups do not need to be disjoint.

This paper employs the rounding scheme proposed by Im and Li \cite{Im2023}. This method shows that if crowd worker $i$ does not dominate either task $j$ or $j'$, then there is a strong negative correlation between the two tasks. We make the following definition.

\begin{defs}
We say a crowd worker $i\in \mathcal{M}$ dominates a task $j\in \mathcal{J}$ if $\sum_{s}x_{ijs} > 1/2$.
\end{defs}

\begin{algorithm}
\caption{The RTS Algorithm}
\label{Alg3}
\begin{algorithmic}[1]
    \STATE Solve the linear program (\ref{LP:1}) to obtain an optimal solution $x$. \label{alg3-1}
    \STATE Apply the rounding scheme of Im and Li~\cite{Im2023} to assign a rectangle $R_{ijs}$ to each task $j \in \mathcal{J}$, setting $\sigma(j) \leftarrow i$ and $s_j \leftarrow s$. \label{alg3-2}
    
    \FOR{each task $j \in \mathcal{J}$} \label{alg3-3}
        \STATE Sample a random offset $\tau_j$ uniformly from $[0, p_{\sigma(j)j}]$. \label{alg3-4}
        \IF{crowd worker $\sigma(j)$ does not dominate task $j$} 
            \STATE $\theta_j \leftarrow (1+\alpha)s_j + \tau_j$ \label{alg3-5}
        \ELSE
            \STATE $\theta_j \leftarrow (1+\alpha)s_j + \tau_j + 0.2 p_{\sigma(j)j}$ \label{alg3-6}
        \ENDIF
        \STATE Assign task $j$ with value $\theta_j$ to crowd worker $\sigma(j)$. \label{alg3-7}
    \ENDFOR
    
    \FOR{each crowd worker $i \in M$}
        \STATE Schedule assigned tasks as early as possible in non-increasing order of $\theta_j$, breaking ties independently at random. \label{alg3-8}
    \ENDFOR
\end{algorithmic}
\end{algorithm}

The \emph{Randomized Time-indexed Scheduling} (RTS) algorithm is detailed in Algorithm~\ref{Alg3}. Lines~\ref{alg3-1} to~\ref{alg3-2}, the algorithm assigns the start time and crowd worker for each task based on the solution to the linear program~(\ref{LP:1}) and the rounding scheme proposed by Im and Li~\cite{Im2023}. From lines~\ref{alg3-3} to~\ref{alg3-7}, the $\theta$ values for each task are calculated, using an $\alpha$ value of 0.3 as configured in~\cite{Im2023}. Finally, line~\ref{alg3-8}, the tasks are sequenced in non-decreasing order of their $\theta$ values, with ties broken uniformly at random.

The following Theorem analyzes the approximation ratio of RTS algorithm.
\begin{thm}\label{thm:thm44}
In the schedule generated by the RTS algorithm, the expected completion time for each task $j$ is bounded by $1.45\cdot C_{j}^{*}$. Here, $C_{j}^{*}$ represents the solution derived from the linear programming (\ref{LP:1}).
\end{thm}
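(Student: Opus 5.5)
The plan is to reduce the statement to the analysis of Im and Li~\cite{Im2023}. The key observation is that the contact time $\Phi_i$ is a fixed additive shift that the scheduling order cannot affect. Fix a task $j$ and condition on the event that the rounding scheme selects rectangle $R_{ijs}$, so $\sigma(j)=i$ and $s_j=s$. The realized completion time is then
\begin{eqnarray*}
C_j = \Phi_i + p_{ij} + \sum_{j' \neq j:\ \sigma(j')=i,\ j' \text{ before } j} p_{ij'} .
\end{eqnarray*}
The LP contribution of the same rectangle is $\Phi_i + s + p_{ij}$. Since $\Phi_i \leq 1.45\,\Phi_i$, it suffices to prove
\begin{eqnarray*}
E\big[p_{ij} + \mathrm{delay}_j \mid R_{ijs}\big] \leq 1.45\,(s+p_{ij})
\end{eqnarray*}
on average over the LP weights $x_{ijs}$. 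After removing $\Phi_i$, this is exactly the quantity bounded in Im and Li's analysis for $R|\,|\sum w_jC_j$ with the time-indexed LP. Combining the two inequalities and de-conditioning with the weights $x_{ijs}$ gives $E[C_j]\le 1.45\,C_j^*$.

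The main step is to verify that every ingredient of Im and Li's argument carries over unchanged. First, the marginals of the SNC rounding equal $y_{ij}=\sum_s x_{ijs}$, and the conditional distribution of the selected start time $s$ is $x_{ijs}/y_{ij}$. Second, constraint (\ref{LP1:b}) caps the total rectangle height at every time $t$ by $1$ on each worker, and this constraint is identical to theirs. Third, the $\theta$ values are computed with $(1+\alpha)s_j+\tau_j$, plus $0.2p_{ij}$ for dominated tasks, and $\alpha=0.3$, exactly as in their algorithm. None of these objects involves $\Phi_i$, so the joint distribution of assignments and $\theta$'s is the same as in their setting.

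From these facts, Im and Li's per-task bound follows. For a pair $j,j'$ on worker $i$ that falls in a common group, they use strong negative correlation, with dominated pairs handled through the extra $0.2p_{ij}$ shift. They also use the area argument described above to bound the probability that $j'$ precedes $j$, and then optimize over the worst-case shape of $x_{i j \cdot}$ to obtain the factor $1.45$.

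The main obstacle I expect is that the final constant in~\cite{Im2023} comes from a delicate case analysis, partly numerical, of the delay against the fractional completion time $s+p_{ij}$. I must check that their bound is stated relative to the LP completion time per task or rectangle, and not to some auxiliary quantity such as a half-point $s+p_{ij}/2$. If their bound is instead of the form $E[\mathrm{delay}_j+p_{ij}] \le 1.45 \sum_s (x_{ijs}/y_{ij})(s+p_{ij})$, the additive argument above goes through verbatim. Adding $\Phi_i$ to both sides only improves the ratio, because the ratio $(\Phi_i+a)/(\Phi_i+b)\le \max\{1,a/b\}$ by Fact~\ref{fact:1}. I would therefore state the lemma from~\cite{Im2023} in this conditional form, argue as above that the tie-breaking and shifts are unaffected by $\Phi_i$, and conclude.
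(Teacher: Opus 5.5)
Your proposal is correct and follows the same route as the paper. The paper likewise takes Im and Li's per-worker conditional bound $E[C_j \mid \sigma(j)=i] \le \frac{1.45}{x_{ij}}\sum_s x_{ijs}(s+p_{ij})$ from their $\Phi$-free setting, adds the constant shift $\Phi_i$ (using $\Phi_i \le 1.45\,\Phi_i$), and deconditions over $x_{ij}$. One point is worth making explicit, and neither you nor the paper spells it out: the optimal $x$ of (\ref{LP:1}) is not Im and Li's LP optimum, so your claim that ``the joint distribution is the same as in their setting'' holds only for a fixed $x$. What actually licenses the reduction is that their per-task bound uses only feasibility of $x$ under (\ref{LP1:a})--(\ref{LP1:c}), not optimality for their objective.
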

\begin{proof}
The analysis by Im and Li~\cite{Im2023} establishes an upper bound on the conditional expectation of each task's completion time across crowd workers. By deconditioning this information, we can derive the unconditional expected completion time for each task. As a result, we can directly obtain the approximation ratio for the RTS algorithm using their findings. Specifically, this ratio is determined by adding the total contact time $\Phi_{i}$ required by each crowd worker $i \in M$ to the corresponding conditional expectation. Thus, we have
\begin{eqnarray}\label{thm4:eq3}
 E[C_{j}|\sigma(j) = i] & \leq & \frac{1.45}{x_{ij}}\sum_{s}x_{ijs} (\Phi_i + s + p_{ij})
\end{eqnarray}
where $x_{ij}=\sum_{s}x_{ijs}$. By deconditioning this equation, we obtain $E[C_{j}] \leq 1.45 \cdot \sum_{i,s}x_{ijs} (\Phi_i + s + p_{ij})$, which is equivalent to 1.45 times the optimal value of the linear program~(\ref{LP:1}).
\end{proof}


\section{Deterministic Algorithm with Time-indexed Approach}\label{sec:Algorithm4}
This section introduces the \emph{Deterministic Time-indexed Scheduling} (DTS) algorithm, as outlined in Algorithm~\ref{Alg4}. In the RTS algorithm, rectangles are first partitioned into groups based on time intervals, with each rectangle assigned to its corresponding group. Then, a rounding method is employed to complete this assignment. Once the task-grouping is determined, the system randomly selects a rectangle $R_{ijs}$ within the designated time interval and generates a time offset $\tau$. During the derandomization phase, the expected value of the total weighted completion time is calculated for each task across all crowd workers and time points. The final decision is made by selecting the option that minimizes this expectation. The time complexity of this approach is $O(mn^2T^2)$. However, once the assignment of tasks to crowd workers is fixed, sorting tasks for each crowd worker according to Smith ratios is optimal~\cite {smith1956}. Consequently, the optimal processing order for each crowd worker is determined, theoretically eliminating the need to calculate the expected total weighted completion time at every time point.

Nevertheless, since the groups assigned to each crowd worker are not disjoint—implying that a task may belong to two or more groups simultaneously—it remains essential to consider the specific conditions of each group. This results in a time complexity of $O(mn^2G)$, where $G$ is the total number of groups per crowd worker. Furthermore, the grouping method must also be derandomized. This method involves an exhaustive search over various time interval sizes and rectangles at different offsets $\tau$ to minimize the total weighted completion time. Since this part is computationally expensive, we only introduce the derandomization method here. In the next section, we will present a deterministic algorithm that does not rely on grouping.

There is a set $\mathcal{M}$ of crowd workers, a set $\mathcal{J}$ of tasks, a set $\mathcal{U}$ of groups, a function $g : \mathcal{U}\rightarrow \mathcal{M}$ mapping groups to crowd workers and $\sigma: \mathcal{J}\rightarrow \mathcal{U}$ mapping tasks to groups. For each group $u \in \mathcal{U}$ such that $g(u) = i$, and for all $j \in \mathcal{J}$, let 
\begin{eqnarray*}
y_{uj} = \sum_{s:R_{ijs}\in u} x_{ijs}
\end{eqnarray*}
as the total height of the rectangles in group $u$ associated with task $j$. 

We have the following theorem for the strong negative correlation scheme.
\begin{thm}\label{thm:4}
\cite{Im2023} Given $\mathcal{M}, \mathcal{J}, \mathcal{U}, g$ and $y$ as described above, there is an efficient randomized algorithm that outputs an assignment $\sigma: \mathcal{J}\rightarrow \mathcal{U}$ of tasks to groups satisfying the following properties.
\begin{itemize}
\item (\textbf{Marginal Probabilities}) For every $u\in \mathcal{U}$ and $j\in \mathcal{J}$ we have $Pr[\sigma(j) = u] = y_{uj}$.
\item (\textbf{Non-Positive Correlation for a Same Crowd Worker}) For any two distinct tasks $j, j'$ and two (possibly
identical) groups $u, u'\in \mathcal{U}$ with $g(u) = g(u')$, we have $Pr[\sigma(j) = u, \sigma(j') = u']\leq y_{uj}y_{u'j'}$.
\item (\textbf{Strongly Negative Correlation for a Same Group}) For any two distinct tasks $j, j'\in J$ and group $u\in \mathcal{U}$ such that $g(u)$ does not dominate any of $j$ and $j'$, we have $Pr [\sigma(j) = \sigma(j') = u]\leq (1-\eta)y_{uj}y_{uj'}$ , where $\eta = 0.1561$.
\end{itemize}
\end{thm}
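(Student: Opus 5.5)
This theorem is the Strong Negative Correlation rounding theorem of Im and Li, and here it is quoted rather than reproved. The point to verify is that their setting applies to ours without change. The plan has three steps.

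First, I would check that the statement is purely combinatorial. It involves only $\mathcal{M}$, $\mathcal{J}$, $\mathcal{U}$, the map $g$, and the fractional values $y_{uj}$. The processing times $p_{ij}$ and the contact times $\Phi_i$ appear nowhere. Im and Li's hypotheses are:
\begin{itemize}
\item $y_{uj}\ge 0$;
\item $\sum_{u} y_{uj}=1$ for each task $j$;
\item each group has total $y$-mass at most $1$.
\end{itemize}
So it suffices to show that our $y$ meets these hypotheses. The first holds because $x\ge 0$.

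For the second, I would use that the rectangles $R_{ijs}$ of task $j$ on worker $i$ are distributed among the groups of $i$. In Im and Li's construction the partition is fractional: a rectangle may be split across several groups, with its height divided among them. Summing over all groups and workers then gives
\[\sum_{u}y_{uj}=\sum_{i,s}x^*_{ijs}=1\]
by constraint (\ref{LP1:a}). The third hypothesis, that each group has mass at most one, is part of how the groups are built from time intervals. For this step I would invoke their interval-based grouping, since it depends only on the rectangle structure and not on any time shift.

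Next, I would check the domination condition. Our definition of domination, $\sum_s x_{ijs}>1/2$, is exactly Im and Li's, so the third bullet of the theorem transfers verbatim, with the same constant $\eta=0.1561$.

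The one conceptual obstacle is whether adding $\Phi_i$ to the objective breaks anything. It does not. The rounding acts on $y$ alone, and $\Phi_i$ enters only later, in the completion-time analysis, as the additive shift already used in Theorem~\ref{thm:thm44}. Once this is observed, the properties follow directly from Im and Li's theorem. For a self-contained argument one would have to reproduce their dependent rounding, which pipages within each worker's groups so that marginals are preserved and same-worker pairs are non-positively correlated, and then apply their group-level correction to obtain the factor $1-\eta$. I would not redo that, and would cite it instead.
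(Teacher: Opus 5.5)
Your proposal is correct and matches the paper, which gives no proof of this theorem but simply imports it from Im and Li. Your check that the rounding depends only on $y$ (so the contact times $\Phi_i$ play no role), and that $\sum_{u}y_{uj}=\sum_{i,s}x_{ijs}=1$ by the LP's assignment constraint, is a sensible justification of the transfer that the paper leaves implicit; one nuance is that the paper's definition $y_{uj}=\sum_{s:R_{ijs}\in u}x_{ijs}$ places each rectangle wholly in one group, so the non-disjointness is the edge $(i,j)$ being spread over several groups through its different rectangles rather than rectangles being split, but either reading gives the same identity.
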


Based on the results of Theorem~\ref{thm:4}, we can determine the expected completion times for all tasks.
Let $\sigma^{-1}(u)=\{j \in J : R_{ijs} \in u\}$ represent the set of all tasks in group $u$, and let $g^{-1}(i)=\{u \in \mathcal{U} : g(u)=i\}$ denote the set of all groups assigned to crowd worker $i$. Additionally, let $\mathcal{D}_{i}=\{j \in J : \sum_{u \in g^{-1}(i)} y_{uj} > 1/2\}$ represent the set of all tasks dominated by crowd worker $i$. We define a binary relation $\preceq$ on the set of tasks $J$. For any two tasks $j, j' \in J$, we have $j' \preceq j$ if and only if$$\frac{w_{j'}}{p_{ij'}} > \frac{w_j}{p_{ij}} \text{ or } \left( \frac{w_{j'}}{p_{ij'}} = \frac{w_j}{p_{ij}} \text{ and } j' > j \right).$$ This relation follows Smith's Rule, where $j' \preceq j$ indicates that $j'$ has a priority higher than $j$. For simplicity, and without loss of generality, we assume $g(u)=i$. If crowd worker $i \in M$ does not dominate task $j$, then:
\begin{eqnarray*}
E[C_{j}|\sigma(j)=u] & = & \Phi_{i}+p_{ij} \\
                     &   & +\sum_{u'\in g^{-1}(i)\setminus\left\{u\right\}}\sum_{j'\in \sigma^{-1}(u'), j'\preceq j} y_{u'j'}p_{ij'} \\
										 &   & +\sum_{j'\in \sigma^{-1}(u)\cap \mathcal{D}_{i}, j'\preceq j} y_{uj'}p_{ij'} \\
										 &   & +\sum_{j'\in \sigma^{-1}(u)\setminus \mathcal{D}_{i}, j'\preceq j} (1-\eta)y_{uj'}p_{ij'}.
\end{eqnarray*}
Conversely, if crowd worker $i \in M$ dominates task $j$, we have:
\begin{eqnarray*}
E[C_{j}|\sigma(j)=u] & = & \Phi_{i}+p_{ij} \\
                     &   & +\sum_{u'\in g^{-1}(i)}\sum_{j'\in \sigma^{-1}(u'), j'\preceq j} y_{u'j'}p_{ij'}. 
\end{eqnarray*}
Accordingly, the expected completion time of task $j$ is given by:
\begin{eqnarray*}
E[C_{j}] & = & \sum_{u\in \mathcal{U}} y_{uj} E[C_{j}|\sigma(j)=u].
\end{eqnarray*}

Let $\mathcal{P} \subseteq \mathcal{J}$ represent the subset of tasks already assigned to crowd workers. For each task $j \in \mathcal{J}$, we define the binary variable $z_{ij}$ to indicate whether task $j$ is assigned to crowd worker $i$ ($z_{ij}=1$) or not ($z_{ij}=0$). Additionally, we define  $\bar{z}_{j} = 1 - \sum_{i}z_{ij}$ as the complement of the variable $\sum_{i}z_{ij}$. Using these notations, we can derive the conditional expected completion time for task $j$ under various scenarios.

\textbf{Case 1: $j \notin \mathcal{P}$ (Unassigned tasks)} If crowd worker $i \in M$ does not dominate task $j$, we have:
\begin{eqnarray*}
E[C_{j}|\sigma(j)=u] & = & \Phi_{i}+p_{ij}+\sum_{j'\in J, j'\preceq j} z_{ij'} p_{ij'}  \\
                     &   & +\sum_{u'\in g^{-1}(i)\setminus\left\{u\right\}}\sum_{j'\in \sigma^{-1}(u'), j'\preceq j} \bar{z}_{j'} y_{u'j'}p_{ij'} \\
										 &   & +\sum_{j'\in \sigma^{-1}(u)\cap \mathcal{D}_{i}, j'\preceq j} \bar{z}_{j'}y_{uj'}p_{ij'} \\
										 &   & +\sum_{j'\in \sigma^{-1}(u)\setminus \mathcal{D}_{i}, j'\preceq j} (1-\eta)\bar{z}_{j'}y_{uj'}p_{ij'}.
\end{eqnarray*}
Conversely, if crowd worker $i \in \mathcal{M}$ dominates task $j$, then:
\begin{eqnarray*}
E[C_{j}|\sigma(j)=u] & = & \Phi_{i}+p_{ij}+\sum_{j'\in J, j'\preceq j} z_{ij'} p_{ij'} \\
                     &   & +\sum_{u'\in g^{-1}(i)}\sum_{j'\in \sigma^{-1}(u'), j'\preceq j} \bar{z}_{j'}y_{u'j'}p_{ij'}. 
\end{eqnarray*}
Accordingly, for any $j \notin \mathcal{P}$, the overall expected completion time of task $j$ is given by:
\begin{eqnarray*}
E[C_{j}] & = & \sum_{u\in \mathcal{U}} y_{uj} E[C_{j}|\sigma(j)=u].
\end{eqnarray*}

\textbf{Case 2: $j \in \mathcal{P}$ (Assigned tasks)} 
If crowd worker $i \in \mathcal{M}$ does not dominate task $j$ and task $j$ has already been assigned to crowd worker $i$, its expected completion time is:
\begin{eqnarray*}
E[C_{j}] & = & \Phi_{i}+\sum_{j'\in \mathcal{J}, j'\preceq j} z_{ij'} p_{ij'}  \\
                     &   & +\sum_{u'\in g^{-1}(i)\setminus\left\{u\right\}}\sum_{j'\in \sigma^{-1}(u'), j'\preceq j} \bar{z}_{j'} y_{u'j'}p_{ij'} \\
										 &   & +\sum_{j'\in \sigma^{-1}(u)\cap \mathcal{D}_{i}, j'\preceq j} \bar{z}_{j'}y_{uj'}p_{ij'} \\
										 &   & +\sum_{j'\in \sigma^{-1}(u)\setminus \mathcal{D}_{i}, j'\preceq j} (1-\eta)\bar{z}_{j'}y_{uj'}p_{ij'}.
\end{eqnarray*}
Conversely, if crowd worker $i \in \mathcal{M}$ dominates task $j$ and task $j$ has already been assigned to crowd worker $i$, then:
\begin{eqnarray*}
E[C_{j}] & = & \Phi_{i}+\sum_{j'\in J, j'\preceq j} z_{ij'} p_{ij'} \\
                     &   & +\sum_{u'\in g^{-1}(i)}\sum_{j'\in \sigma^{-1}(u'), j'\preceq j} \bar{z}_{j'}y_{u'j'}p_{ij'}. 
\end{eqnarray*}


The DTS algorithm starts by calculating the conditional expected completion times of each task across all potential crowd workers. The expected completion time for each task is represented as a convex combination of these conditional expectations. Once a task is assigned to a specific crowd worker, its expected completion time is updated to reflect the conditional expectation for that crowd worker. Building on this, the DTS algorithm sequentially assigns tasks in order to minimize the total expected weighted completion time. Specifically, in line \ref{alg4-1}, we first derandomize the grouping method to minimize $E[\sum_{j}w_{j} C_{j}]$. Then, in lines \ref{alg4-2} to \ref{alg4-3}, a crowd worker $i$ is selected for each task to optimize the objective function. Finally, in line \ref{alg4-4}, the tasks assigned to each crowd worker are scheduled in non-increasing order of the Smith ratio ($w/p$).

\begin{algorithm}
\caption{The DTS Algorithm}
\label{Alg4}
\begin{algorithmic}[1]
    \STATE Solve the linear program (\ref{LP:1}) to obtain $y$.
    \STATE Derandomize the grouping method to minimize $E[\sum_{j} w_j C_j]$. \label{alg4-1}
    \STATE Initialize $\mathcal{P} \leftarrow \emptyset$, $z \leftarrow 0$, and $\mathcal{J}_i \leftarrow \emptyset$ for all $i \in \mathcal{M}$. \label{alg4-2}
    
    \FOR{each task $j \in \mathcal{J}$} 
        \STATE Select crowd worker $i^* \leftarrow \arg\min_{i} E[\sum_{q} w_q C_q \mid \mathcal{P} \cup \{j\}, z_{ij}=1]$.
        \STATE Update $\mathcal{P} \leftarrow \mathcal{P} \cup \{j\}$, $z_{i^*j} \leftarrow 1$, and $\mathcal{J}_{i^*} \leftarrow \mathcal{J}_{i^*} \cup \{j\}$.
    \ENDFOR \label{alg4-3}
    
    \FOR{each crowd worker $i \in \mathcal{M}$}
        \STATE Schedule tasks in $\mathcal{J}_i$ in non-increasing order of the Smith ratio ($w/p$). \label{alg4-4}
    \ENDFOR
\end{algorithmic}
\end{algorithm}

\section{Efficient Deterministic Algorithm with Time-indexed Approach}\label{sec:Algorithm5}
This section introduces the \emph{Efficient Deterministic Time-indexed Scheduling} (EDTS) algorithm, as outlined in Algorithm~\ref{Alg5}.
We have removed the grouping method due to its time-consuming nature. This adjustment makes the method equivalent to the independent rounding technique after derandomization. The time complexity of this approach is only $O(mn^2)$. There is a set $M$ of crowd workers, a set $\mathcal{J}$ of tasks and $\sigma: \mathcal{J}\rightarrow \mathcal{M}$ mapping tasks to crowd workers. For all $j \in \mathcal{J}$, let 
\begin{eqnarray*}
y_{ij} = \sum_{s} x_{ijs}
\end{eqnarray*}
be the total height of the rectangles in crowd worker $i$ associated with task $j$. 
We have the conditional expectation:
\begin{eqnarray*}
E[C_{j}|\sigma(j)=i] & = & \Phi_{i}+p_{ij} +\sum_{j'\in \mathcal{J}, j'\preceq j} y_{ij'}p_{ij'}.
\end{eqnarray*}
Accordingly, the expected completion time of task $j$ is given by:
\begin{eqnarray*}
E[C_{j}] & = & \sum_{i} y_{ij} E[C_{j}|\sigma(j)=i].
\end{eqnarray*}

Let $\mathcal{P} \subseteq \mathcal{J}$ denote the subset of tasks already assigned to crowd workers. For each task $j \in \mathcal{J}$, we define the binary variable $z_{ij}$ to indicate whether task $j$ is assigned to crowd worker $i$ ($z_{ij}=1$) or not ($z_{ij}=0$). In addition, we define  $\bar{z}_{j} = 1 - \sum_{i}z_{ij}$ as the complement of the variable $\sum_{i} z_{ij}$. Using these notations, we can derive the conditional expected completion time for task $j$ under various scenarios.

\textbf{Case 1: $j \notin \mathcal{P}$ (Unassigned tasks)} We have:
\begin{eqnarray*}
E[C_{j}|\sigma(j)=i] & = & \Phi_{i}+p_{ij}+\sum_{j'\in \mathcal{J}, j'\preceq j} z_{ij'} p_{ij'}  \\
                     &   & +\sum_{j'\in \mathcal{J}, j'\preceq j} \bar{z}_{j'} y_{ij'}p_{ij'}.
\end{eqnarray*}
Accordingly, for any $j \notin \mathcal{P}$, the overall expected completion time of task $j$ is given by:
\begin{eqnarray*}
E[C_{j}] & = & \sum_{i} y_{ij} E[C_{j}|\sigma(j)=i].
\end{eqnarray*}

\textbf{Case 2: $j \in \mathcal{P}$ (Assigned tasks)} 
If task $j$ has already been assigned to crowd worker $i$, its expected completion time is:
\begin{eqnarray*}
E[C_{j}] & = & \Phi_{i}+\sum_{j'\in J, j'\preceq j} z_{ij'} p_{ij'}  \\
                     &   & +\sum_{j'\in \mathcal{J}, j'\preceq j} \bar{z}_{j'} y_{ij'}p_{ij'}. 
\end{eqnarray*}


The EDTS algorithm starts by deriving the conditional expected completion times of each task across all potential crowd workers. The expected completion time for each task is expressed as a convex combination of these conditional expectations. Once a task is assigned to a specific crowd worker, its expected completion time reduces to the corresponding conditional expectation. Building upon this, the EDTS algorithm sequentially assigns tasks in order to minimize the total expected weighted completion time. In lines \ref{alg5-2} to \ref{alg5-3}, a crowd worker $i$ is selected for each task to optimize the objective function. Finally, in line \ref{alg5-4}, tasks assigned to each crowd worker are scheduled in non-increasing order of the Smith ratio ($w/p$).

\begin{algorithm}
\caption{The EDTS Algorithm}
\label{Alg5}
\begin{algorithmic}[1]
    \STATE Solve the linear program (\ref{LP:1}) to obtain $y$.
    \STATE Initialize $\mathcal{P} \leftarrow \emptyset$, $z \leftarrow 0$, and $\mathcal{J}_i \leftarrow \emptyset$ for all $i \in \mathcal{M}$. \label{alg5-2}
    
    \FOR{each task $j \in \mathcal{J}$} 
        \STATE Select crowd worker $i^* \leftarrow \arg\min_{i} E[\sum_{q} w_q C_q \mid \mathcal{P} \cup \{j\}, z_{ij}=1]$.
        \STATE Update $\mathcal{P} \leftarrow \mathcal{P} \cup \{j\}$, $z_{i^*j} \leftarrow 1$, and $\mathcal{J}_{i^*} \leftarrow \mathcal{J}_{i^*} \cup \{j\}$.
    \ENDFOR \label{alg5-3}
    
    \FOR{each crowd worker $i \in \mathcal{M}$}
        \STATE Schedule tasks in $\mathcal{J}_i$ in non-increasing order of the Smith ratio ($w/p$). \label{alg5-4}
    \ENDFOR
\end{algorithmic}
\end{algorithm}

\section{Results and Discussion}\label{sec:Results}
This section presents simulations to evaluate the performance of the proposed algorithm in comparison to three LRF-based algorithms, using both synthetic and real datasets. The experiments were conducted in  Python on an Intel i5 3.10 GHz, 32 GB RAM machine. The algorithms are evaluated using the Total Weighted Completion Time Ratio (WCTR), which serves as the primary performance metric for assessing the schedules generated by the algorithm. Due to the varying characteristics of the task sets, it is essential to normalize the total weighted completion time to enable a standardized comparison. Specifically, the WCTR is defined as the ratio of the algorithm's output to the lower bound derived from the linear program (\ref{LP:Interval}). This linear program is described in Section~\ref{sec:Results:algorithms}. The results of these simulations are presented and analyzed in the following sections.

\subsection{Synthetic Datasets and Settings}
In our synthetic dataset, we evaluate the effects of several factors, including the task-to-worker ratio $\frac{n}{m}$, the number of crowd workers, the mean and standard deviation of the required service time for tasks, the range of crowd workers' base capabilities, and the range of task influence factors.
Below are the default values for these parameters:
\begin{itemize}
\item For each crowd worker $i$, the total contact time $\Phi_{i}$ is assigned randomly from a uniform distribution ranging from $[1, 30]$.
\item The total number of crowd workers is set to 10.
\item The ratio $\frac{n}{m}$ is established at 25.
\item To determine the required service times for tasks, we start by generating a base service time, denoted as $\alpha_{j}$, for each task $j$. We also establish a base capability, represented as $\beta_{i}$, for each crowd worker $i$. Additionally, we introduce an influence factor $\gamma_{ij}$, which represents crowd worker $i$'s proficiency in handling task $j$. The required service time is defined as $p_{ij} = \alpha_{j} \beta_{i} \gamma_{ij}$~\cite{ali2000task}. In our experimental setup, the base service time $\alpha_{j}$ is drawn from a Gaussian distribution with a mean and variance both set to 30 time units. The base capability $\beta_{i}$ is sampled from a uniform distribution within the interval $[0.5, 2]$, while the influence factor $\gamma_{ij}$ is similarly sampled from a uniform distribution in the range $[0.1, 2]$.
\item The weight of each task is set to a random integer in the range $[1, 100]$.
\end{itemize}


\subsection{Real Datasets and Settings}
We follow the methodology established in previous studies~\cite{Zhang2025, chen2025approx} and conduct our experiments using the Cambridge Haggle dataset~\cite{c70011-22}. This dataset contains detailed records of Bluetooth connections generated by participants with mobile devices over several days, spanning various scenarios including office environments, academic conferences, and urban areas. By analyzing the precise timestamps of each connection event, we can estimate the parameters of the exponential distribution. In our experimental setup, iMote mobile devices are defined as requesters, while the external devices are treated as crowd workers. Given that multiple requesters may contribute to a single dataset, we average their results to create the final evaluation baseline.

For each crowd worker $i$, we define $\lambda_{i}$ as follows: 
\begin{eqnarray*}
\lambda_{i} = \frac{l_{i}}{\sum_{k=1}^{l_{i}}\Delta t_{k,i}},
\end{eqnarray*}
where $\Delta t_{k,i}$ denotes the inter-meeting time between the $(k-1)$-th and $k$-th contacts between requester $r$ and crowd worker $i$, and $l_{i}$ represents the total number of contacts between them. To ensure a robust experimental scale, we select the 128 external devices with the highest $\lambda_{i}$ values as the crowd workers for each dataset.



\subsection{Algorithms}\label{sec:Results:algorithms}
In this experiment, we compare the EDTS algorithm (Algorithm~\ref{Alg5}) with three LRF-based algorithms. In the EDTS algorithm, the linear program (\ref{LP:1}) does not solve the scheduling problem with polynomial time complexity in terms of the input size. To overcome this limitation, we apply a logarithmic transformation to the number of time steps, ensuring that it becomes polynomially bounded. For a given positive parameter $\epsilon$, let
\begin{eqnarray*}
L= \left\lceil \log_{(1+\epsilon)} \left(\sum_{j \in \mathcal{J}}\max_{i \in \mathcal{M}} \{p_{ij}\}\right)\right\rceil.
\end{eqnarray*}
Consequently, the value of $L$ is polynomially bounded by the input size of the scheduling problem. Let $\mathcal{L}=\{0, 1, \ldots, L\}$ denote the set of time point indices, and let $t_{l}$ be the $l$-th time point, where $t_{0}=0$ and $t_{l}=(1+\epsilon)^{l-1}$ for $1 \leq l \leq L$. Furthermore, let the variable $x_{ijl}$ represent the processing time of task $j$ by crowd worker $i$ during the interval $(t_{l}, t_{l+1}]$, where $i \in \mathcal{M}$, $j \in \mathcal{J}$, and $l \in \mathcal{L}$. We now formulate the following linear program based on these interval-indexed variables:
\begin{subequations}\label{LP:Interval}
\begin{align}
& \text{min}  && \sum_{j \in \mathcal{J}} w_{j} \sum_{i \in \mathcal{M}} \sum_{l \in \mathcal{L}} x_{ijl} \cdot (\Phi_{i}+t_l+p_{ij})    &   & \tag{\ref{LP:Interval}} \\
& \text{s.t.} && \sum_{i \in \mathcal{M}} \sum_{l \in \mathcal{L}} x_{ijl} = 1, && \forall j\in \mathcal{J} \notag  \\
& && \sum_{j \in \mathcal{J}} p_{ij} \cdot x_{ijl} \leq t_{l+1} - t_l, && \forall i\in \mathcal{M}, l \in \mathcal{L}  \notag\\
&  && x_{ijl} = 0, && \forall i\in \mathcal{M}, j\in \mathcal{J}, t_{l}>L-p_{ij}  \notag\\
& && x_{ijl} \in \left\{0, 1\right\}, && \forall i, j, l \notag
\end{align}
\end{subequations}
We set $\epsilon = 3$ for the linear program (\ref{LP:Interval}) in the subsequent experiments.

Furthermore, because the original LRF algorithm was designed for identical crowd workers, we need to standardize the required service times for each task. Let $p_j$ represent the required service time of task $j$, which serves exclusively to determine scheduling priority. We evaluate three variants, LRF-MAX, LRF-MIN, and LRF-MEAN, defined as follows:
\begin{itemize}
\item \textbf{LRF-MAX:} $p_j=\max_{i \in \mathcal{M}} \{p_{ij}\}$.
\item \textbf{LRF-MIN:} $p_j=\min_{i \in \mathcal{M}} \{p_{ij}\}$.
\item \textbf{LRF-MEAN:} $p_j=\text{mean}_{i \in \mathcal{M}} \{p_{ij}\}$.
\end{itemize}
Each algorithm determines the scheduling sequence by sorting tasks in non-increasing order of their Smith ratios. Finally, each task is assigned to the crowd worker that minimizes its completion time.


We generated 100 instances for each experimental case and evaluated the algorithms by computing their average performance as well as the corresponding standard deviations.

\subsection{Results}
Figure~\ref{fig:ratio2} illustrates the performance comparison in terms of WCTR as the number of tasks per crowd worker ($n/m$) increases from 5 to 50. The experimental results demonstrate that the EDTS algorithm consistently outperforms all LRF-based algorithms across all test cases. Specifically, EDTS has the lowest WCTR, ranging from approximately 1.07 to 1.65, and significantly smaller error bars, indicating greater stability. In contrast, LRF-MAX exhibits the highest WCTR, which reaches 2.30 at $n/m = 50$. Moreover, the performance gap between EDTS and the second-best performer, LRF-MIN, widens as the task density increases, further confirming the effectiveness of the proposed EDTS approach in dense scheduling environments.

\begin{figure}[!ht]
    \centering
        \includegraphics[width=3.8in]{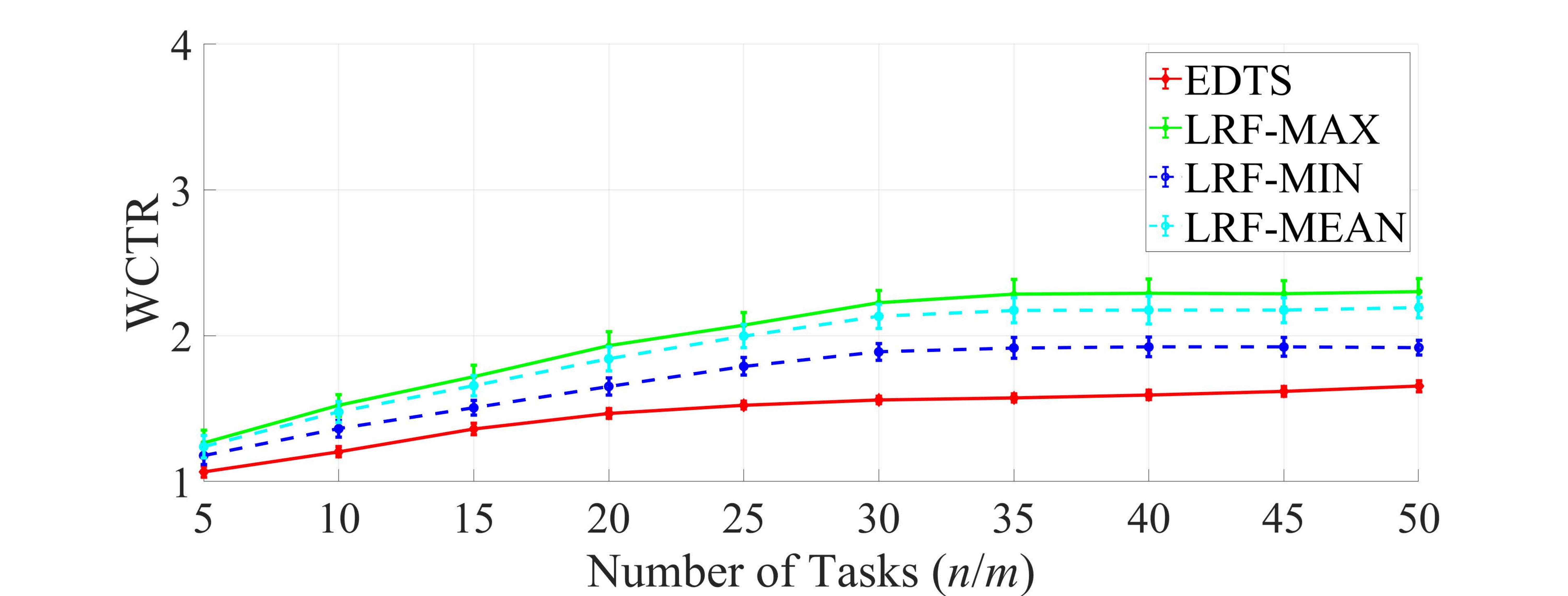}
    \caption{Performance comparison of algorithms with varying numbers of tasks per crowd worker ($n/m$).}
    \label{fig:ratio2}
\end{figure}

Figure~\ref{fig:ratio4_box} presents the statistical distribution of WCTR for the four algorithms using box plots. The results indicate that the EDTS algorithm not only achieves the lowest median WCTR (approximately 1.65) but also has the smallest interquartile range (IQR), demonstrating its superior robustness compared to the other algorithms. Although LRF-MIN outperforms other LRF-based algorithms, it still has a higher WCTR and shows greater variance than EDTS. The concentrated distribution of WCTR for EDTS further confirms its reliability and consistent performance in optimizing weighted completion times.

\begin{figure}[!ht]
    \centering
        \includegraphics[width=3.8in]{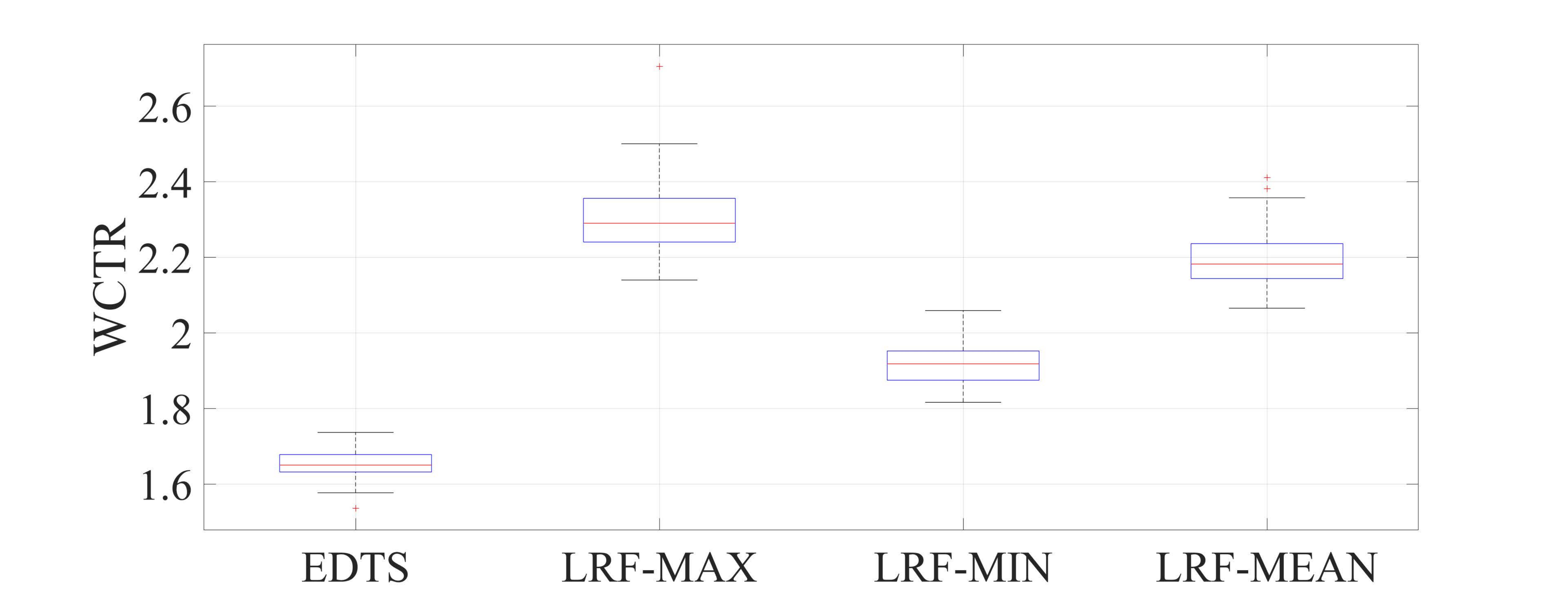}
    \caption{Box plot of WCTR for the four algorithms.}
    \label{fig:ratio4_box}
\end{figure}

Figure~\ref{fig:ratio3} evaluates the impact of the number of crowd workers ($m$) on the WCTR performance. The results indicate that increasing the number of crowd workers generally reduces WCTR across all LRF variants, suggesting that additional resources help mitigate scheduling delays. Notably, EDTS demonstrates remarkably stable, superior performance (WCTR $\approx$ 1.5) across crowd worker counts. This result suggests that EDTS is highly efficient even in resource-constrained scenarios. Furthermore, the performance gap between the algorithms narrows as $m$ increases, although EDTS consistently provides the best results throughout the entire testing range.

\begin{figure}[!ht]
    \centering
        \includegraphics[width=3.8in]{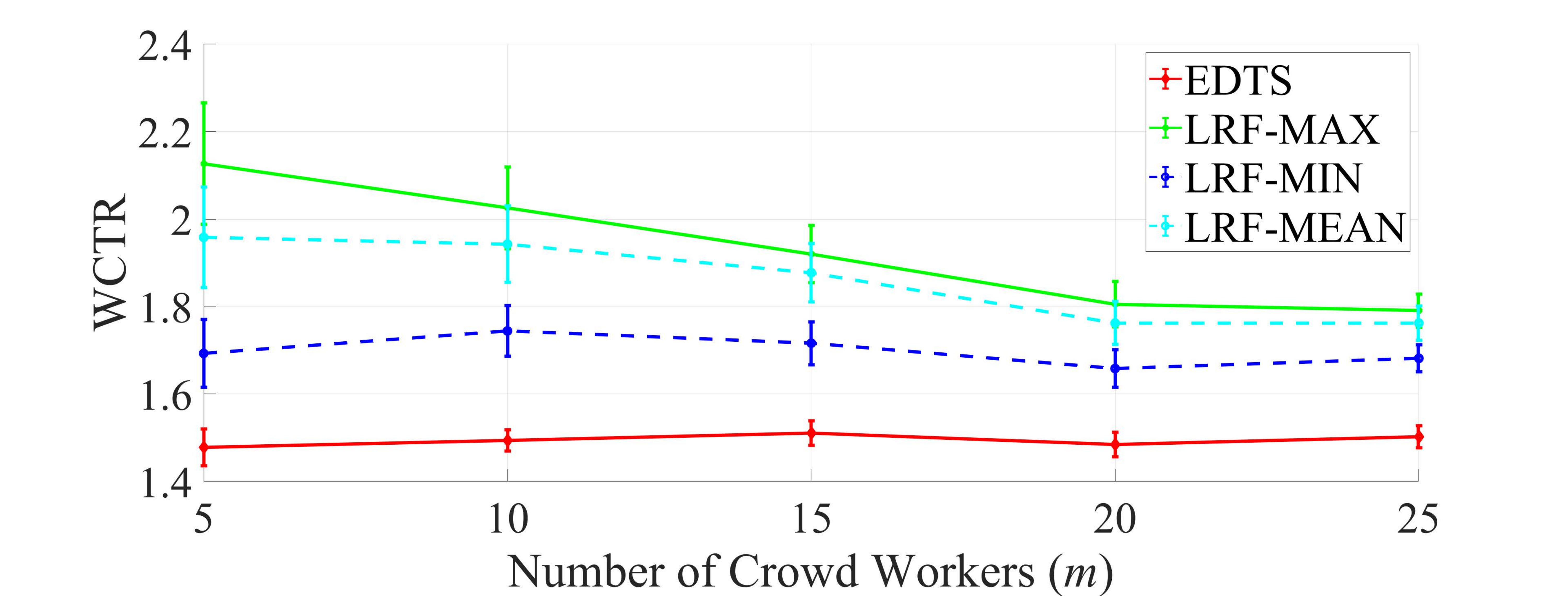}
    \caption{Performance comparison of algorithms with varying numbers of crowd workers.}
    \label{fig:ratio3}
\end{figure}

Figure~\ref{fig:ratio4} demonstrates the impact of average base service time $\alpha$ of tasks on the WCTR performance. Consistent with previous findings, the proposed EDTS algorithm demonstrates superior performance across the entire workload range, from 5 to 50. Notably, EDTS shows remarkable stability across workload variations, maintaining a WCTR of approximately 1.46 when workloads exceed 30, with minimal standard deviation. In contrast, LRF-based algorithms, particularly LRF-MAX and LRF-MEAN, exhibit a significant decline in performance as the workload increases, with their WCTR values rising by nearly 26\% before leveling off. The clear gap between EDTS and the comparison algorithms highlights the effectiveness of our proposed scheduling strategy in handling diverse workload intensities.

\begin{figure}[!ht]
    \centering
        \includegraphics[width=3.8in]{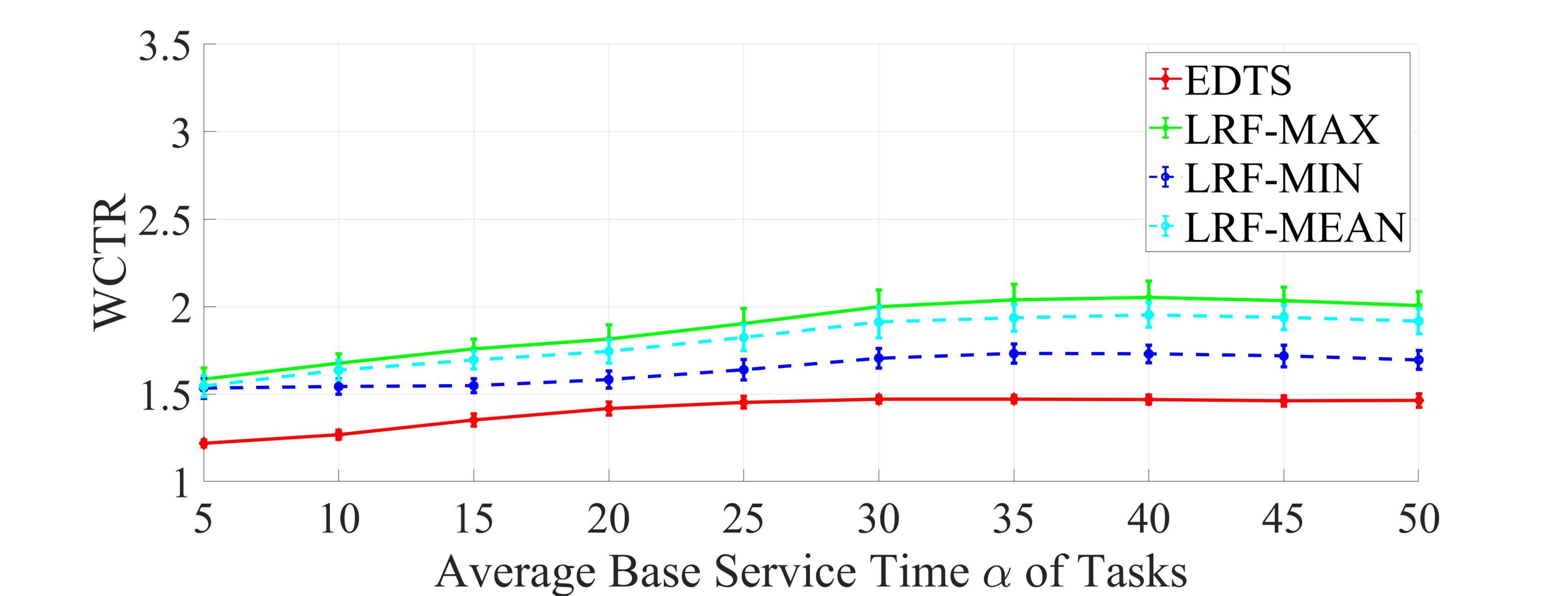}
    \caption{Performance comparison of algorithms with different means of the base service time $\alpha$.}
    \label{fig:ratio4}
\end{figure}

Figure~\ref{fig:ratio5} evaluates the sensitivity of the algorithms to the variance in task sizes, illustrated by the standard deviation of base service time $\alpha$. Notably, all evaluated algorithms demonstrate strong resilience to increasing task heterogeneity, as shown by the nearly horizontal WCTR curves. Across the entire range (20 to 38), EDTS consistently achieves the lowest WCTR at approximately 1.48, significantly outperforming LRF-based algorithms. The low standard deviation (approximately 0.02) and stable mean values of EDTS indicate that our proposed method is insensitive to the required service time variability of the task.

\begin{figure}[!ht]
    \centering
        \includegraphics[width=3.8in]{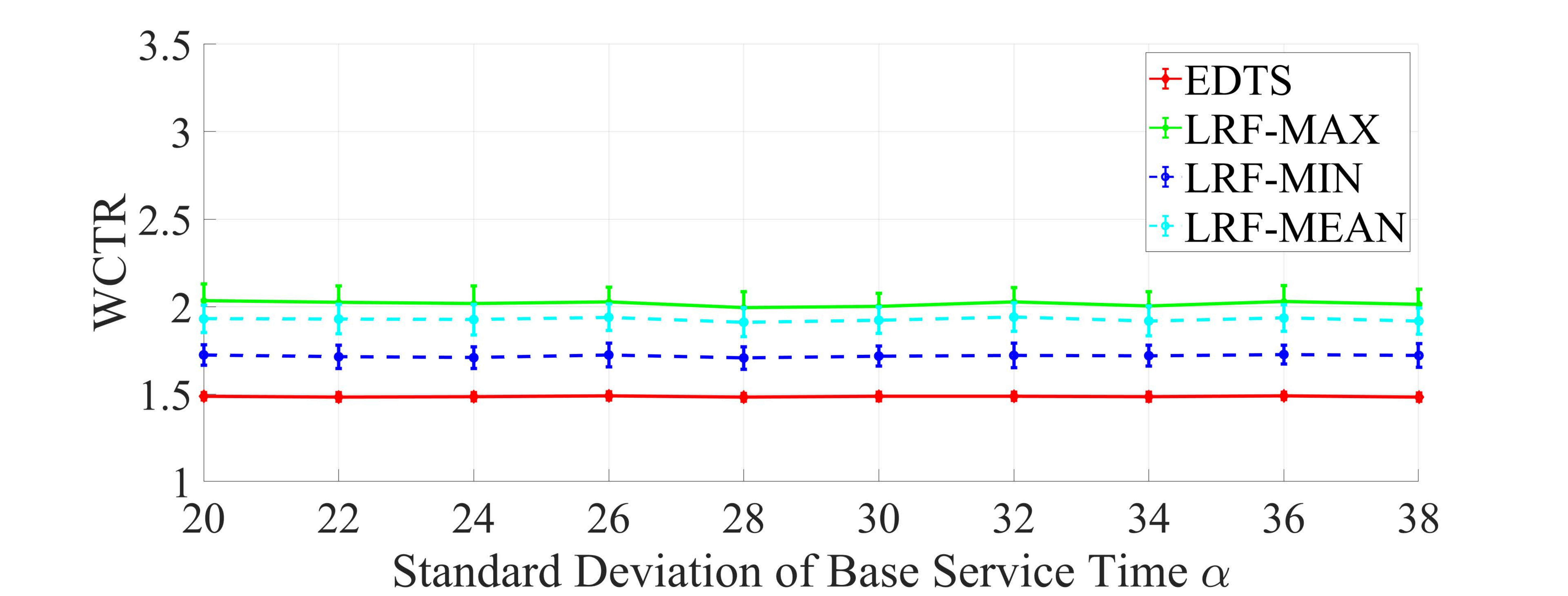}
    \caption{Performance comparison of algorithms with various standard deviations of the base service time $\alpha$.
}
    \label{fig:ratio5}
\end{figure}

Figure~\ref{fig:ratio1} illustrates the impact of the ratio of the lower and upper limits of $\beta$ on the WCTR performance. The experimental results demonstrate that all evaluated algorithms are highly robust to variations in this ratio. Specifically, EDTS consistently demonstrates the lowest WCTR (approximately 1.5) across all parameter ranges of $\beta$. The near-horizontal trends across all curves indicate that the proposed scheduling algorithm is insensitive to the concentration or dispersion of the $\beta$ parameter, ensuring stable, reliable optimization performance across a range of parameter settings.

\begin{figure}[!ht]
    \centering
        \includegraphics[width=3.8in]{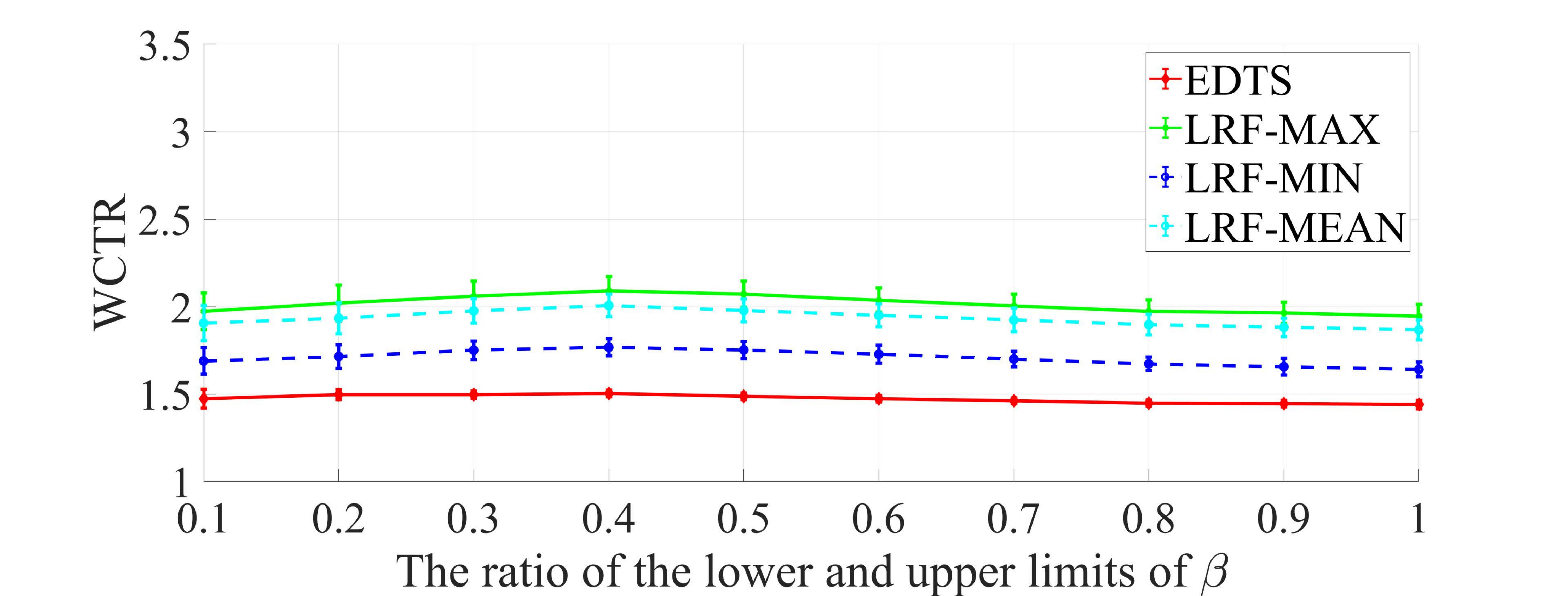}
    \caption{Performance of algorithms under different ratio of the lower and upper limits of $\beta$.}
    \label{fig:ratio1}
\end{figure}

Figure~\ref{fig:ratio1_2} evaluates the impact of the ratio of the lower and upper limits of $\gamma$ on the performance of WCTR. A significant crossover in performance is observed as the ratio increases. When the ratio is below 0.85 (indicating higher parameter variance), EDTS consistently outperforms all LRF variants, with a stable WCTR of 1.5 to 1.6. In contrast, the WCTR of LRF algorithms decreases sharply as the ratio approaches 1.0. As the ratio nears 1.0, $\gamma$ becomes uniform across the environment, effectively simplifying the scenario to one involving related crowd workers. In these highly concentrated parameter settings, LRF variants eventually outperform EDTS. These results show that while LRF algorithms are sensitive to variations in $\gamma$ and excel mainly in near-uniform or related crowd worker scenarios, the proposed EDTS demonstrates superior robustness and competitive performance across a much wider range of parameter uncertainties.

\begin{figure}[!ht]
    \centering
        \includegraphics[width=3.8in]{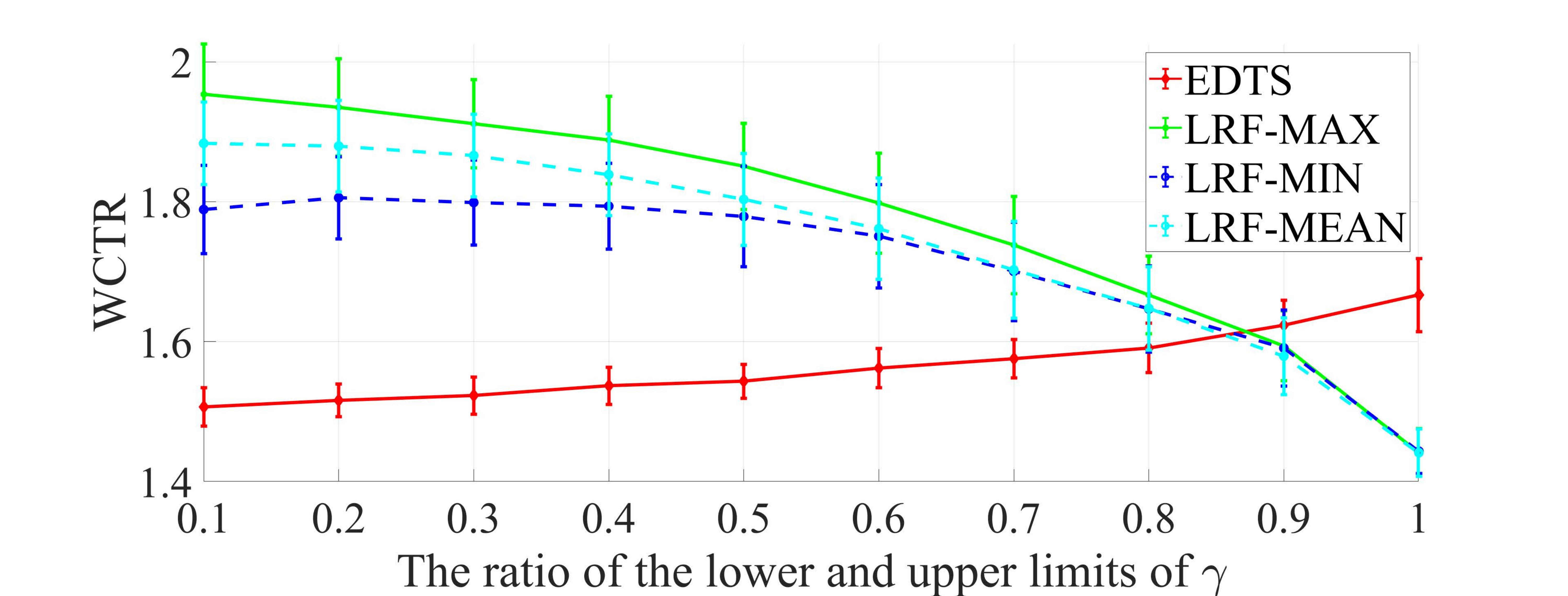}
    \caption{Performance of algorithms under different ratio of the lower and upper limits of $\gamma$}
    \label{fig:ratio1_2}
\end{figure}

Figure~\ref{fig:ratio6} presents results on the Intel real-world dataset showing that the proposed EDTS algorithm consistently achieves the lowest WCTR across all tested task densities ($n/m$). Specifically, EDTS maintains a superior mean WCTR of 1.3024 to 1.5434, significantly outperforming the LRF-based algorithms, which range from 1.50 to 1.86. While all algorithms exhibit an upward trend in WCTR as task density increases, EDTS displays a more stable, gradual growth curve, with lower standard deviation (e.g., $\sigma \approx 0.027$ at $n/m = 6$). This result indicates that EDTS offers greater scalability and robustness in practical scheduling scenarios than the more volatile LRF-based algorithms.

\begin{figure}[!ht]
    \centering
        \includegraphics[width=3.8in]{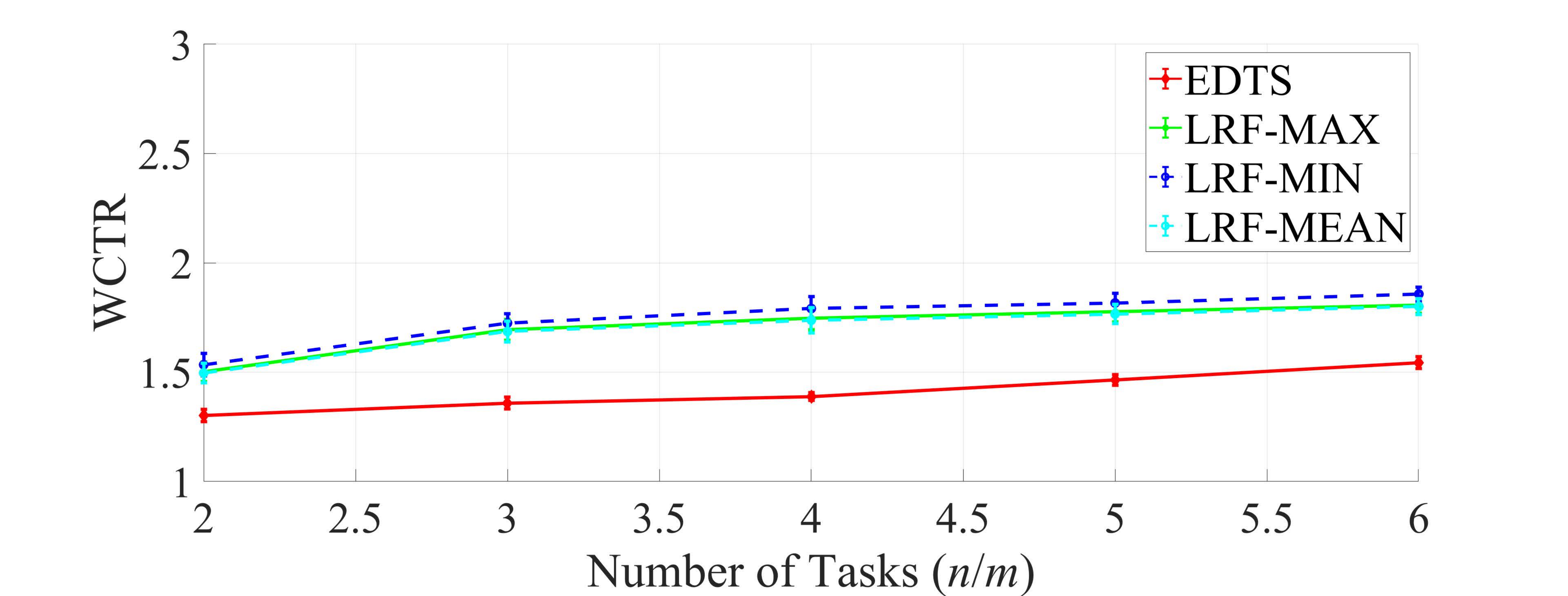}
    \caption{Performance comparison of algorithms with varying numbers of tasks using the Intel real dataset.}
    \label{fig:ratio6}
\end{figure}

Figure~\ref{fig:ratio7} presents results on the Cambridge real-world dataset, further demonstrating the superiority of the proposed EDTS algorithm. EDTS consistently achieves the lowest WCTR, starting at 1.1114 for $n/m=2$ and reaching only 1.4647 at $n/m=6$. Notably, the performance gap between EDTS and the LRF-based algorithms widens as task density increases. While LRF-MIN escalates to a WCTR of 1.7249 at $n/m=6$, EDTS shows a significantly more gradual and controlled growth trajectory. Moreover, the lower standard deviations of EDTS (ranging from 0.0204 to 0.0258) highlight its exceptional stability and robustness in handling diverse task loads within the Cambridge environment.

\begin{figure}[!ht]
    \centering
        \includegraphics[width=3.8in]{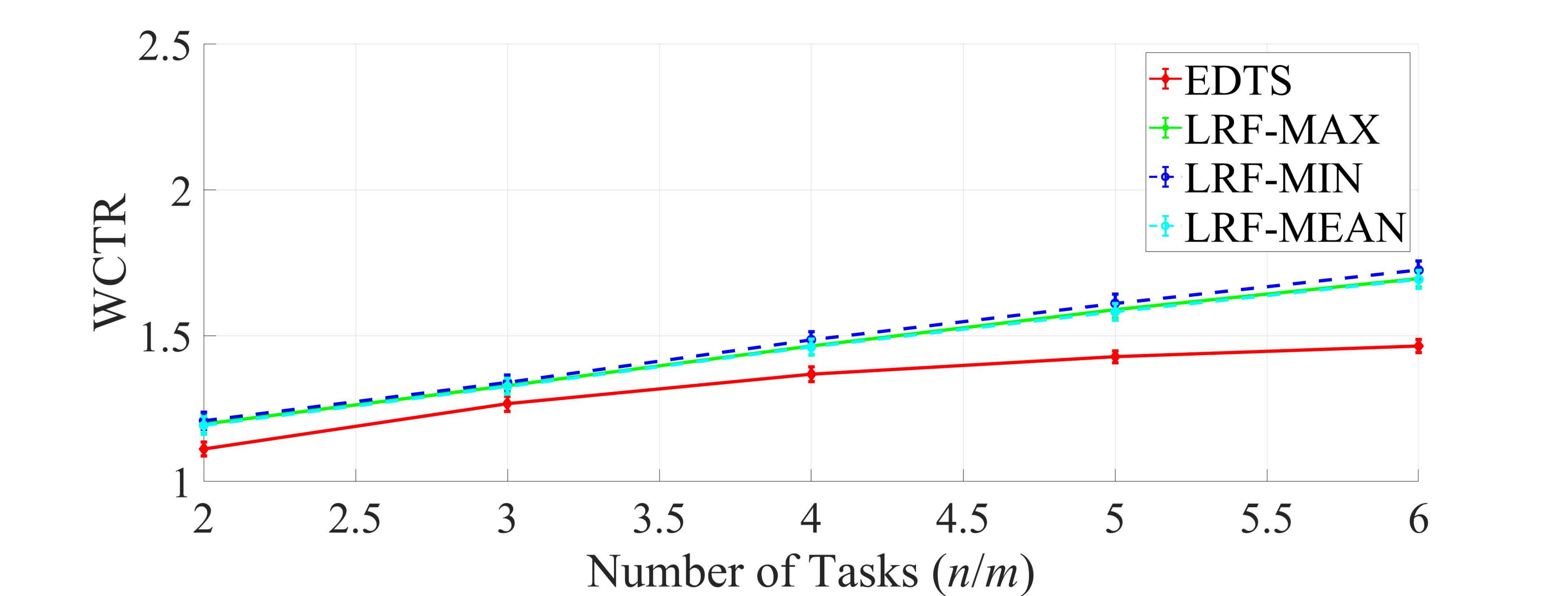}
    \caption{Performance comparison of algorithms with varying numbers of tasks using the Cambridge real dataset.}
    \label{fig:ratio7}
\end{figure}

Figure~\ref{fig:ratio8} illustrates results on the Infocom real-world dataset, demonstrating that the proposed EDTS consistently outperforms all LRF-based algorithms. While all algorithms show an increasing trend in WCTR as $n/m$ grows from 2 to 6, EDTS maintains the lowest WCTR, starting at approximately 1.045 and reaching only 1.402 at the highest load. The narrow error bars further demonstrate the stability of EDTS compared to the LRF variants, confirming its effectiveness and robustness in practical scheduling environments.

\begin{figure}[!ht]
    \centering
        \includegraphics[width=3.8in]{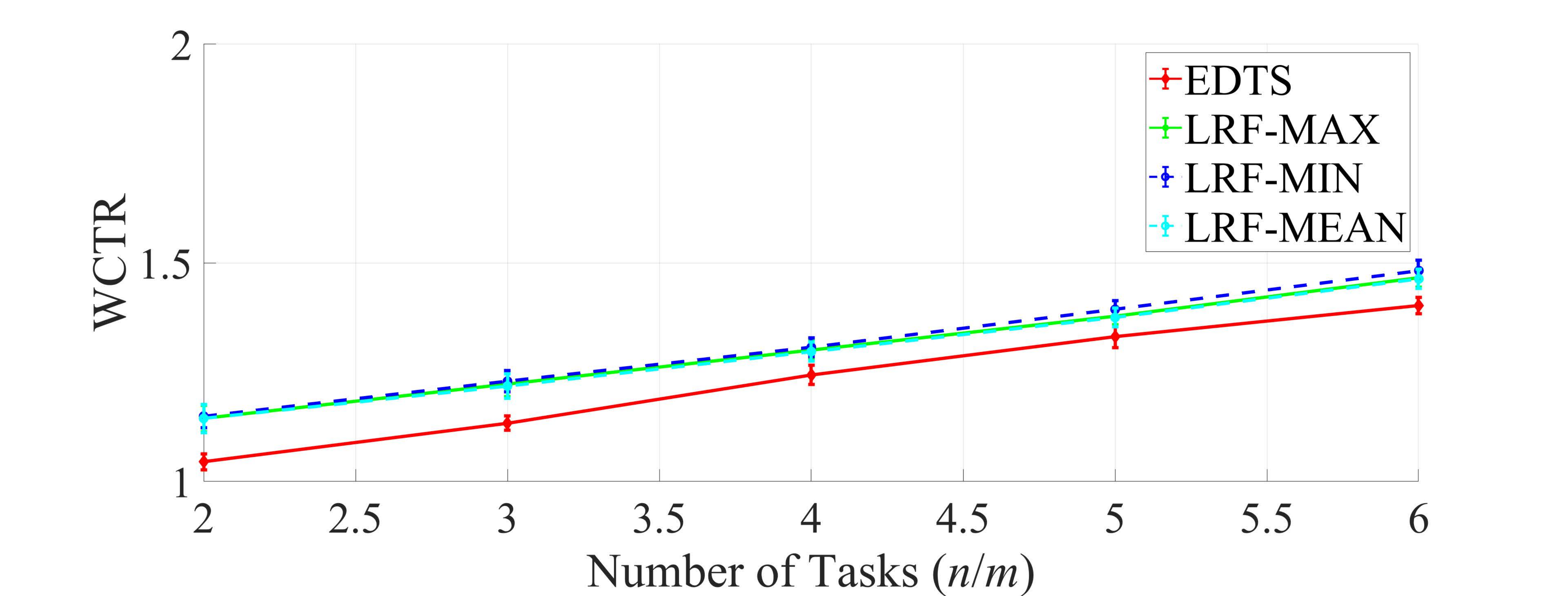}
    \caption{Performance comparison of algorithms with varying numbers of tasks using the Infocom real dataset.}
    \label{fig:ratio8}
\end{figure}


\section{Concluding Remarks}\label{sec:Conclusion}
This paper provided a detailed analysis of task scheduling within mobile crowdsourcing networks. In environments with identical crowd workers, we refined the analytical bounds of the LRF algorithm, successfully removing task-weight dependencies to achieve a tighter approximation ratio of $\max\{\frac{3}{2}, \frac{\phi_{max}}{\phi_{min}}\}$. We also extended our findings to the online setting, providing a clear competitive ratio. For unrelated crowd workers, our newly proposed randomized algorithm improves the expected approximation ratio to 1.45, representing a significant improvement over our previous 1.5-approximation result. Additionally, we introduced a reliable derandomization approach. Finally, by integrating Smith-ratio-based optimization, we developed an efficient algorithm that outperforms existing LRF variants in simulations. Collectively, these findings provide a stronger theoretical foundation and practical tools for optimizing task allocation in dynamic crowdsourcing environments.

\end{document}